\newtheorem{theorem}{Theorem}
\newtheorem{lemma}[theorem]{Lemma}
\newtheorem{proposition}[theorem]{Proposition}
\theoremstyle{definition}
\newtheorem{remark}[theorem]{Remark}
\newcommand{\Hil}{\mathcal{H}}
\newcommand{\K}{\mathbb{K}}
\newcommand{\R}{\mathbb{R}}
\newcommand{\C}{\mathbb{C}}
\newcommand{\T}{\mathbb{T}}
\newcommand{\Lin}{\mathcal{L}}
\newcommand{\Cha}{\mathcal{C}}
\newcommand{\ketbrar}[1]{\ketbra{#1}{#1}}
\newcommand{\mc}[1]{\mathcal{#1}}
\newcommand{\mb}[1]{\mathbb{#1}}
\newcommand{\N}{\mathbb N}
\definecolor{lighter}{RGB}{10.0,146.0,35.0}
\definecolor{darker}{RGB}{2.0,100.0,20.0}
\definecolor{sininen}{RGB}{89.5,101.8,238.1}
\definecolor{punainen}{RGB}{227.0,12.0,12.0}
\newcommand{\punainen}[1]{\textcolor{punainen}{#1}}
\begin{document}

    \title[Cond. for Large-Sample Majorization in Terms of $\alpha$-$z$ Rel. Entropies]{Conditions for Large-Sample Majorization of Pairs of Flat States in Terms of $\alpha$-$z$ Relative Entropies}

    \author{Frits Verhagen\textsuperscript{1}, Marco Tomamichel\textsuperscript{1,2}\\
        and Erkka Haapasalo\textsuperscript{1}}
	\address{1. Centre for Quantum  Technologies,  National University of Singapore\\
        2. Department of Electrical and Computer Engineering, National University of Singapore}
    
	\date{\today}
	
    \begin{abstract}
        We offer the first operational interpretation of the $\alpha$-$z$ relative entropies, a measure of distinguishability between two quantum states introduced by Jak\v{s}i\'{c} {\it et al.}~and Audenaert and Datta. We show that these relative entropies appear when formulating conditions for large-sample or catalytic majorization of pairs of flat states and certain generalizations of them. Indeed, we show that such transformations exist if and only if all the $\alpha$-$z$ relative entropies for $\alpha<1$ of the two pairs are ordered. In this setting, the $\alpha$ and $z$ parameters are truly independent from each other. These results also yield an expression for the optimal rate of converting one flat state pair into another. Our methods use real-algebraic techniques involving preordered semirings and certain monotone homomorphisms and derivations on them. 
    \end{abstract}

	\maketitle

\section{Introduction}


We study the following framework of \emph{majorization} between two quantum state pairs: Consider two systems with Hilbert spaces $\Hil_{\rm in}$ and $\Hil_{\rm out}$ and pairs of states (density operators) $(\rho,\sigma)$ on $\Hil_{\rm in}$ and $(\rho',\sigma')$ on $\Hil_{\rm out}$. We want to know if there is a single quantum channel (a completely positive trace-preserving linear map) $\mc C$ such that
\begin{equation}\label{eq:maj}
\mc C(\rho)=\rho',\qquad \mc C(\sigma)=\sigma'.
\end{equation}
This setting is encountered, e.g.,\ in quantum thermodynamics where we ask if a state $\rho$ can be transformed into $\rho'$ using a Gibbs-preserving map. In this case, both $\sigma$ and $\sigma'$ are equal to $\gamma_\beta$, the Gibbs state with inverse temperature $\beta$. 

Instead of the single-shot case of \eqref{eq:maj}, we are mainly concerned with {\it large-sample} and {\it catalytic} majorization. For the former, we ask if, for all $n\in\N$ sufficiently large, there exists a channel $\mc C_n$ such that
\begin{equation}\label{eq:LS}
\mc C_n\big(\rho^{\otimes n}\big)=(\rho')^{\otimes n},\qquad \mc C_n\big(\sigma^{\otimes n}\big)=(\sigma')^{\otimes n}.
\end{equation}
For catalytic majorization, we ask if there exists a pair $(\tau,\omega)$ of states (catalysts) and a channel $\mc C$ such that
\begin{equation}\label{eq:catalytic}
\mc C(\rho\otimes\tau)=\rho'\otimes\tau,\qquad \mc C(\sigma\otimes\omega)=\sigma'\otimes\omega.
\end{equation}
If the catalyzing states $\tau$ and $\omega$ were mutually orthogonal, they could catalyze any transformations. This is why we always assume that a pair $(\tau,\omega)$ of catalysts are not completely orthogonal (i.e.,\ their fidelity $F(\tau,\omega)>0$). Large-sample majorization implies catalytic majorization \cite{Duan2005}, but the converse is not typically true \cite{Feng_et_al_2006}. In addition to the exact scenario above, we will also be interested in \emph{asymptotic} large-sample and catalytic majorization. In this case, we require that the target output states are only approximately reached by a channel, but up to arbitrary precision. 

The aim of this work is to identify conditions in the form of inequalities
\begin{equation}\label{eq:suff}
\mb D(\rho\|\sigma)>\mb D(\rho'\|\sigma')\text{ or }\mb D(\rho\|\sigma)\geq\mb D(\rho'\|\sigma'), 
\end{equation}
for certain real-valued maps $\mb D$ on pairs of states, guaranteeing either exact or asymptotic, large-sample or catalytic, majorization. It will turn out that the strict inequality above is applicable to the exact scenario, and the non-strict inequality to the asymptotic one. The quantities $\mb D$ should naturally be monotone under majorization, i.e.,\ if $(\rho,\sigma)$ majorizes $(\rho',\sigma')$ in the sense of \eqref{eq:maj}, then $\mb D(\rho\|\sigma)\geq\mb D(\rho'\|\sigma')$. In other words, $\mb D$ should satisfy the {\it data-processing inequality} (DPI). It is also natural to require that these quantities are {\it (tensor) additive} or {\it extensive} in the sense that
\begin{equation}
\mb D(\rho_1\otimes\rho_2\|\sigma_1\otimes\sigma_2)=\mb D(\rho_1\|\sigma_1)+\mb D(\rho_2\|\sigma_2).
\end{equation}
Together with DPI, additivity implies that when one pair of states majorizes another pair in the large-sample or catalytic setting, then the non-strict inequalities in \eqref{eq:suff} will be met. Quantities like $\mb D$ are often called {\it relative entropies}. Many quantum relative entropies have been presented in the literature. However, while all \emph{classical} relative entropies have been identified \cite{Mu_et_al_2020}, finding all possible quantum relative entropies is an open problem. Indeed, the non-commutativity of quantum states results in many quantum generalizations of a single classical relative entropy, and finding a complete characterization of all monotone generalizations is currently still intractable. 

Our partial solution to this problem is restricting the kinds of pairs of quantum states that we study. Recall that a state $\rho$ is a \emph{classical-quantum (cq) state} if there are probabilities $p_i$ and states $\rho_i$ such that
\begin{equation}
\rho=\sum_{i=1}^n p_i \ketbrar{i}\otimes\rho_i
\end{equation}
with fixed orthonormal flags $\ket{i}$, and $n\geq1$. We will specialize to pairs of such states where additionally all component states $\rho_i$ are pure, i.e. pairs $(\rho,\sigma)$ such that 
\begin{equation}\label{eq:pairform}
\rho=\sum_{i=1}^n p_i \ketbrar{i}\otimes\ketbrar{\alpha_i},\quad\sigma=\sum_{i=1}^n q_i \ketbrar{i}\otimes\ketbrar{\beta_i}. 
\end{equation}
Here, $\ket{\alpha_i}$ is a normalized vector when $p_i>0$ and the zero vector when $p_i=0$, for $i=1,\ldots,n$, and similarly for $\ket{\beta_i}$, $q_i$. We say that $\rho$ and $\sigma$ of \eqref{eq:pairform} {\it have some overlap} if there is $i$ such that $\langle\alpha_i|\beta_i\rangle\neq 0$. We also say that $\rho$ and $\sigma$ are {\it non-parallel} if $|\braket{\alpha_i}{\beta_i}|<1$ for all $i$. Let us point out a particular subset of pairs $(\rho,\sigma)$ like those in \eqref{eq:pairform}: Assume that $\rho=\tr[P]^{-1}P$ and $\sigma=\tr[Q]^{-1}Q$ where $P$ and $Q$ are projections. Due to Jordan's Lemma, this pair $(\rho,\sigma)$ has the desired cq-decompositions with pure component states. Naturally, weighted direct sums of these {\it flat states} can also be decomposed in the same manner. For technical reasons, we will exclusively consider pairs of cq-states with pure components as in \eqref{eq:pairform} that have some overlap, and denote this collection of pairs by $\mc F$. Note that any $(\rho,\sigma)\in\mc F$ as in \eqref{eq:pairform} is essentially finite dimensional since $n<\infty$ and the vectors $\ket{\alpha_i}$ and $\ket{\beta_i}$ span a 2-dimensional space for each $i$.

In the above restricted setting, we identify a set of relative entropies $\mb D$ giving sufficient and (almost) necessary conditions as in \eqref{eq:suff} for exact or asymptotic, large-sample or catalytic majorization. These relative entropies turn out to be within the set of the {\it $\alpha$-$z$ relative entropies} $D_{\alpha,z}$. The $\alpha$-$z$ relative entropies for any pair of quantum states $(\rho,\sigma)$, introduced in \cite{Jaksic2012} and \cite{Audenaert_Datta_2015} (see also a new generalization in \cite{Hiai_Jencova_2024}), are defined as 
\begin{equation}\label{eq:alphazdivergence}
    D_{\alpha,z}(\rho\|\sigma):=\frac{1}{\alpha-1}\log\tr\left[\left(\sigma^\frac{1-\alpha}{2z}\rho^\frac{\alpha}{z}\sigma^\frac{1-\alpha}{2z}\right)^z\right], 
\end{equation}
as well as certain limits $\alpha\rightarrow1$, $z\rightarrow0$, of this expression, and with some restrictions on the supports of $\rho$, $\sigma$. The question of what the precise range of $(\alpha,z)$ is where the $D_{\alpha,z}$ satisfy DPI, has been investigated in a number of papers \cite{Audenaert_Datta_2015,Hiai2013,Carlen_et_al_2018} and conclusively answered in \cite{Zhang2020}. 

The relative entropies $\mathbb{D}$ giving conditions of the form \eqref{eq:suff} for exact or asymptotic, large-sample or catalytic majorization of flat states (and their generalizations) are essentially those $D_{\alpha,z}$ with parameters $\alpha$ and $z$ such that 
\begin{equation}\label{eq:parameterset}
    \alpha\in(0,1)\text{ and }z\geq\max\{\alpha,1-\alpha\}, 
\end{equation}
together with certain extensions to $\alpha=0,1$ and $z\rightarrow\infty$. Note that the $D_{\alpha,z}$ with $\alpha>1$ do not play a role here because they would diverge on the pairs of states we consider, since the component states are pure. To our knowledge, this is the first time an operational interpretation of the $\alpha$-$z$ relative entropies, where $\alpha$ and $z$ are truly independent of each other, has been demonstrated. Let us note however that the special cases of the so-called `sandwiched' quantum relative entropies \cite{Muller-Lennert_et_al_2013,Wilde2013StrongCF}, where $\alpha=z$, and the Petz-type relative entropies \cite{Petz_85,Petz_86}, where $z=1$, have their own fields of applications and operational interpretations.

For the setting of asymptotic catalytic majorization between pairs of states in $\mc F$, we show the following result, which gives necessary and sufficient conditions in terms of $\alpha$-$z$ relative entropies. 

\begin{theorem}\label{thm:approximateC}
    Let $(\rho,\sigma)$, $(\rho',\sigma')$ be two pairs of states in $\mc F$ (e.g. flat states), where $\rho$ and $\sigma$ are non-parallel and $\rho'$, $\sigma'$ do not commute. Then the following are equivalent: 
    \begin{enumerate}[(i)]
        \item $D_{\alpha,z}(\rho\|\sigma)\geq D_{\alpha,z}(\rho'\|\sigma')$ when $\alpha\in(0,1)$ and $z>\max\{\alpha,1-\alpha\}$. 
        \item For each $\varepsilon>0$, there exist a state $\rho'_\varepsilon$ such that $F(\rho'_\varepsilon,\rho')\geq1-\varepsilon$, a quantum channel $\mc C_\varepsilon$, and a catalyzer pair $(\tau_\varepsilon,\omega_\varepsilon)\in\mc F$ such that 
        \begin{equation}
            \mc C_\varepsilon(\rho\otimes\tau_\varepsilon)=\rho'_\varepsilon\otimes\tau_\varepsilon\quad\text{and}\quad \mc C_\varepsilon(\sigma\otimes\omega_\varepsilon)=\sigma'\otimes\omega_\varepsilon. 
        \end{equation}
    \end{enumerate}
\end{theorem}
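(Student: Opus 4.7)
The plan is to establish the two implications separately. The direction (ii)$\Rightarrow$(i) follows from DPI, tensor additivity, and continuity of $D_{\alpha,z}$; the direction (i)$\Rightarrow$(ii) requires first an asymptotic large-sample majorization result and then a catalyst construction built from it.

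For (ii)$\Rightarrow$(i), fix $\alpha\in(0,1)$ and $z>\max\{\alpha,1-\alpha\}$. Applying DPI to $\mc C_\varepsilon$ and using tensor additivity of $D_{\alpha,z}$ yields
\begin{equation}
D_{\alpha,z}(\rho\|\sigma)+D_{\alpha,z}(\tau_\varepsilon\|\omega_\varepsilon)\;\geq\; D_{\alpha,z}(\rho'_\varepsilon\|\sigma')+D_{\alpha,z}(\tau_\varepsilon\|\omega_\varepsilon).
\end{equation}
The membership $(\tau_\varepsilon,\omega_\varepsilon)\in\mc F$ ensures $D_{\alpha,z}(\tau_\varepsilon\|\omega_\varepsilon)<\infty$ on this parameter range (the overlap assumption prevents orthogonality, and $\alpha<1$ avoids the blow-up that would occur with $\alpha>1$ on pure components), so the catalyst contribution cancels. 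The bound $F(\rho'_\varepsilon,\rho')\geq 1-\varepsilon$ gives trace-norm convergence by Fuchs--van de Graaf, and the essentially finite-dimensional nature of pairs in $\mc F$ makes $D_{\alpha,z}$ (lower semi)continuous in its first argument on the relevant support, delivering (i) in the limit $\varepsilon\to 0$.

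For (i)$\Rightarrow$(ii), I would first perturb: since (i) consists of non-strict inequalities, replace $(\rho',\sigma')$ by a nearby pair $(\tilde\rho'_\delta,\tilde\sigma'_\delta)\in\mc F$ so that the strict dominance $D_{\alpha,z}(\rho\|\sigma)>D_{\alpha,z}(\tilde\rho'_\delta\|\tilde\sigma'_\delta)$ holds on the entire parameter range. The key technical step is then to show that strict $\alpha$-$z$ dominance between pairs in $\mc F$ implies exact large-sample majorization, producing a channel $\mc C_n$ with $\mc C_n(\rho^{\otimes n})=\tilde\rho_\delta^{\prime\,\otimes n}$ and $\mc C_n(\sigma^{\otimes n})=\tilde\sigma_\delta^{\prime\,\otimes n}$ for all sufficiently large $n$. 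This step exploits the cq-pure structure: via Jordan's Lemma each pair in $\mc F$ decomposes into direct sums of two-dimensional ``qubit'' problems parametrised by $(p_i,q_i,|\langle\alpha_i|\beta_i\rangle|)$, and the $D_{\alpha,z}$ acquire closed-form expressions in these parameters, reducing the question to a combinatorial/classical-majorization problem on the parameter vector. Finally, convert the large-sample statement into a catalytic one via a Duan-type construction: the catalyst $(\tau_\varepsilon,\omega_\varepsilon)$ is a direct sum, indexed by $k=1,\dots,n-1$, of ``partially converted'' tensor products of the input and output states, so that the channel acting on $\rho\otimes\tau_\varepsilon$ performs one conversion step and cyclically advances the index; since it is built from tensor products and direct sums of states in $\mc F$, automatically $(\tau_\varepsilon,\omega_\varepsilon)\in\mc F$. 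Taking $n$ large and $\delta$ small controls both $F(\rho'_\varepsilon,\rho')$ and the fidelity of catalyst return to arbitrary precision.

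The main obstacle is the large-sample reduction. The fact that the full two-parameter range of $(\alpha,z)$ is needed, rather than a one-parameter subfamily such as the Petz ($z=1$) or sandwiched ($z=\alpha$) line, suggests that neither of the standard single-parameter Rényi-based ``second-law'' arguments suffices on its own; one must exploit the cq-pure structure to identify an exact correspondence between the two-parameter $\alpha$-$z$ family and the ``witnesses'' that detect failure of majorization in $\mc F$. A secondary subtlety is verifying that the non-parallelism of $(\rho,\sigma)$ and non-commutation of $(\rho',\sigma')$ are exactly the hypotheses needed to carry out the strict-dominance perturbation without introducing degenerate saturations.
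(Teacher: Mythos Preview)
Your (ii)$\Rightarrow$(i) direction matches the paper. For (i)$\Rightarrow$(ii) there are two genuine gaps. First, you perturb \emph{both} target states, but the statement demands $\mc C_\varepsilon(\sigma\otimes\omega_\varepsilon)=\sigma'\otimes\omega_\varepsilon$ exactly; the paper perturbs only $\rho'$, by rotating each pure component $|\alpha'_i\rangle$ slightly toward $|\beta'_i\rangle$ in every block where $0<|\langle\alpha'_i|\beta'_i\rangle|<1$ (non-commutation of $\rho',\sigma'$ guarantees such blocks exist). This strictly increases every block fidelity, hence strictly decreases $\hat D_{\alpha,z}(\rho'_\varepsilon\|\sigma')$ over the \emph{closed} range $\alpha\in[0,1]$, $z\geq\max\{\alpha,1-\alpha\}$, and also strictly decreases the tropical limit $\hat D^\T$. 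Non-parallelism of $(\rho,\sigma)$ is used precisely here, to ensure $\hat D^\T(\rho\|\sigma)>0$ so that the tropical strict inequality holds even when $\hat D^\T(\rho'\|\sigma')=0$. You do not mention $\hat D^\T$, but it is one of the monotone homomorphisms that must be checked before the exact result can be invoked.

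Second, and more seriously, you route through large-sample majorization and then Duan's catalyst. Large-sample ordering in this semiring requires the input pair to be power universal, which needs the orthogonality condition \eqref{eq:pu2} on top of non-parallelism \eqref{eq:pu1}; Theorem~\ref{thm:approximateC} assumes only the latter. The paper avoids this detour: once all strict homomorphism inequalities hold for $(\rho,\sigma)$ versus $(\rho'_\varepsilon,\sigma')$, the catalytic conclusion of the Vergleichsstellensatz (Theorem~\ref{thm:Fritz2022}(a), equivalently the catalytic clause of Theorem~\ref{thm:LS}) applies directly, with no power-universality hypothesis, and the catalyst it produces lies in the semiring and hence in $\mc F$. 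Your final remark about controlling ``the fidelity of catalyst return to arbitrary precision'' is also off: catalyst return is exact in both the Duan construction and the Vergleichsstellensatz output.
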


Additionally, we show that the range $\alpha\in(0,1)$ and $z>\max\{\alpha,1-\alpha\}$ for the $\alpha$-$z$ relative entropies appearing in (i) above is minimal, in the sense that if the range in (i) is made smaller by removing any non-empty open subset, it is not a sufficient condition for (ii) to hold. See Theorem \ref{thm:minimal} in the next section. 

Theorem \ref{thm:approximateC} has a counterpart (Theorem \ref{thm:approximateLS}), dealing with the asymptotic large-sample case, where we need to introduce an extra condition on $(\rho,\sigma)$. Additionally, we show similar results giving sufficient conditions for \emph{exact} large-sample and catalytic majorization (Theorem \ref{thm:LS}), for which we will also need the earlier mentioned extensions of $D_{\alpha,z}$ to $\alpha=0,1$ and $z\rightarrow\infty$. These conditions are almost always necessary as well, in a sense that will be made more precise later. 

Our results also yield the optimal rate of converting a pair $(\rho,\sigma)$ of cq-states with pure components to another such pair $(\rho',\sigma')$: it is given by the infimum of the ratio $D_{\alpha,z}(\rho\|\sigma)/D_{\alpha,z}(\rho'\|\sigma')$ over all $\alpha\in(0,1)$, $z>\max\{\alpha,1-\alpha\}$. See Theorem \ref{thm:rates}. 


The framework of majorization we study in this work generalizes the classical notion of the comparison of statistical experiments defined as in \eqref{eq:maj} where $\rho$, $\rho'$, $\sigma$, and $\sigma'$ commute (i.e.\ are probability distributions) and $\mc C$ is a stochastic map. This classical majorization can be equivalently characterized by the majorization of Lorenz curves \cite{Blackwell53}. The latter notion was generalized to the quantum case using quantum relative Lorenz curves in \cite{Buscemi2017}, where it was called \emph{quantum relative majorization}. Our notion of majorization characterized by \eqref{eq:maj} implies quantum relative majorization. In the case of pairs of qubit states the two notions coincide \cite{Alberti1980}, but in general, they are not equivalent, unlike in the classical case \cite{matsumoto2014}. 

The techniques we use in proving our results come from the real-algebraic theory of preordered semirings recently developed by Tobias Fritz in \cite{fritz2022,fritz2023}. The theorems presented there, collectively known as the \emph{Vergleichsstellens\"atze}, allow us to identify the relative entropies involved in the conditions for majorization. These methods can be seen as an extension of the theory of asymptotic spectra previously developed by Volker Strassen, who used his techniques to introduce the first subcubic algorithm for matrix multiplication \cite{strassen1986,strassen1987,strassen1988,strassen1991}. Subsequently, Strassen's theory has found applications in probability and information theory, and computer science. Another extension of this theory was presented in \cite{Vrana2022}. Recently, these techniques have been successfully applied to settings involving quantum states \cite{bunth2021asymptotic,perry2022semiring,bunth2023,bunth2024}. Our current results also build on our previous work in \cite{farooq2024,verhagen2025}. 

This paper is organized as follows: In Section \ref{sec:results}, we state the remaining main results that were outlined above. To prepare for the proof of our results, we introduce some notation and necessary mathematical tools in Section \ref{sec:bg}. In Section \ref{sec:technical} we apply the methodology introduced in Section \ref{sec:bg} to our specific setting. We use the technical results of Section \ref{sec:bg} in Section \ref{sec:proofs} to derive our main results on large-sample or catalytic, and exact or asymptotic, majorization, and the optimal conversion rate.

\section{The Main Results}\label{sec:results}

    We now present our remaining results. For a pair of states $(\rho,\sigma)\in\mc F$, and $\alpha\in[0,1]$, $z\geq\max\{\alpha,1-\alpha\}$, define     
    \begin{equation}\label{eq:alphazexplicit}
        \hat{D}_{\alpha,z}(\rho\|\sigma):=-\frac{1}{z+1}\log\left(\sum_{i=1}^n(p_i)^\alpha(q_i)^{1-\alpha}|\braket{\alpha_i}{\beta_i}|^{2z}\right), 
    \end{equation}
    and 
    \begin{equation}\label{eq:DT}
        \hat{D}^\T(\rho\|\sigma):=-\log\left(\max_{i=1,\ldots,n}|\braket{\alpha_i}{\beta_i}|^{2}\right). 
    \end{equation}
    In \eqref{eq:alphazexplicit} for $\alpha=0$ or $1$, we adopt the convention that $0^0:=0$. When $\alpha\in(0,1)$, the $\hat{D}_{\alpha,z}$ coincide with the $\alpha$-$z$ relative entropies $D_{\alpha,z}$ as in \eqref{eq:alphazdivergence}, when evaluated on pairs of states in $\mc F$, except for a different prefactor in front of the $\log$ for technical reasons. Since the $D_{\alpha,z}$ satisfy DPI for $\alpha\in(0,1)$, $z\geq\max\{\alpha,1-\alpha\}$, the same holds for $\hat{D}_{\alpha,z}$, where  we use the continuity of $\hat{D}_{\alpha,z}$ in $\alpha$ to establish the DPI when $\alpha=0,1$. $\hat{D}^\T$ is the point-wise limit $z\rightarrow\infty$ of $\hat{D}_{\alpha,z}$, for any $\alpha\in[0,1]$ as one may rather easily verify. As a consequence, also $\hat{D}^\T$ satisfies DPI. See Figure \ref{fig:alphazregion} for a depiction of the family of all these relative entropies. 

\begin{SCfigure}[50][b]
\begin{overpic}[scale=0.4,unit=1mm]{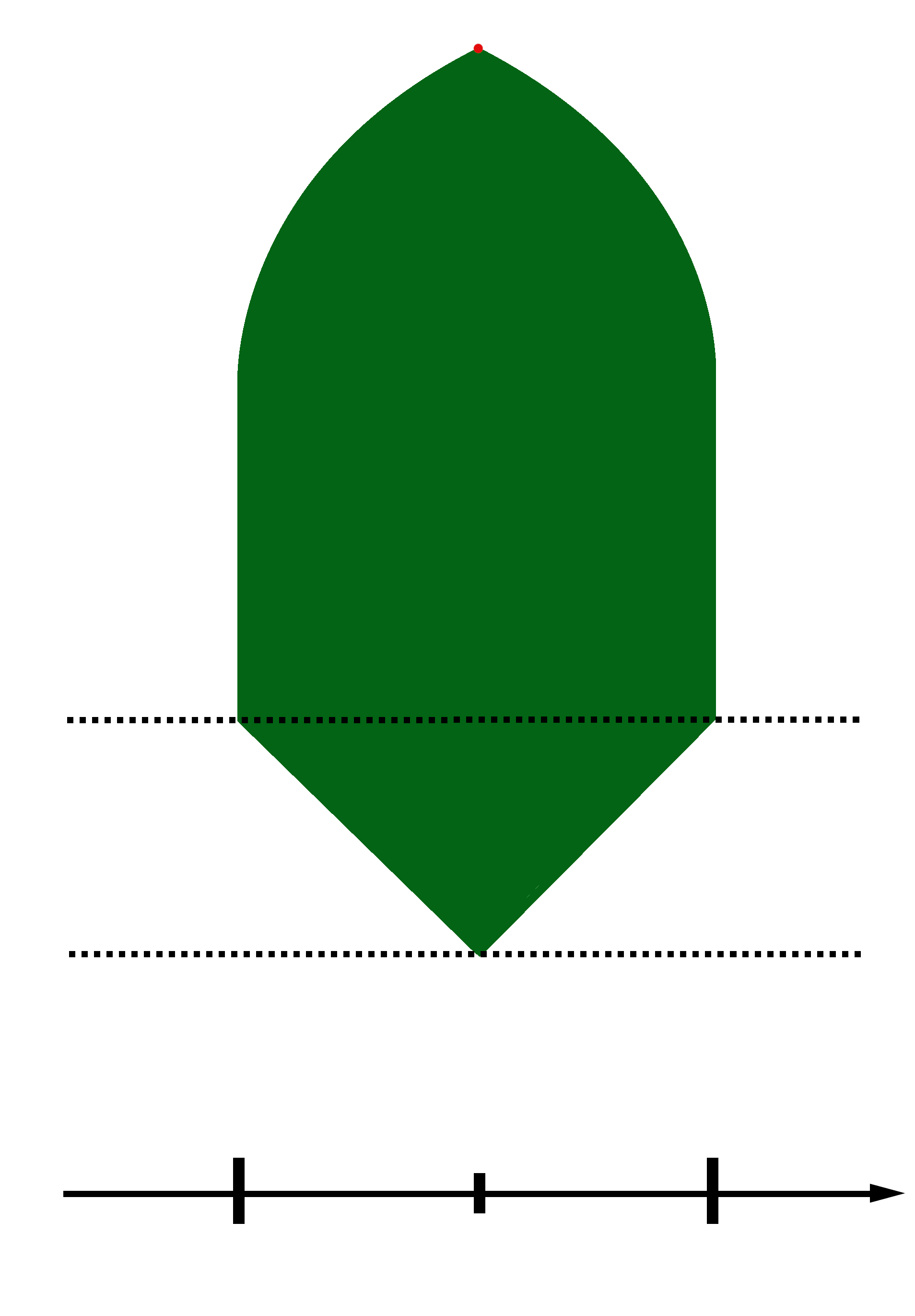}
\put(0,43){$z=1$}
\put(0,27){$z=1/2$}
\put(0,10){$z=0$}
\put(11,2){$\alpha=0$}
\put(28,2){$\alpha=1/2$}
\put(45,2){$\alpha=1$}
\put(22,57){\begin{Huge}
$\textcolor{white}{\hat{D}_{\alpha,z}}$
\end{Huge}}
\put(28.5,73){\begin{large}
\rotatebox{90}{$\textcolor{white}{\xrightarrow{z\rightarrow\infty}}$}
\end{large}}
\put(30,91){\begin{large}
\punainen{$\hat{D}^\T$}
\end{large}}
\end{overpic}
\caption{\label{fig:alphazregion} A depiction of the family of all relative entropies given by \eqref{eq:alphazexplicit} and \eqref{eq:DT}. For $\alpha\in(0,1)$ and $z\geq\max\{\alpha,1-\alpha\}$, $\hat{D}_{\alpha,z}$ equals the $\alpha$-$z$ relative entropies given by \eqref{eq:alphazdivergence}, up to a prefactor. With this different prefactor, the $\alpha$-$z$ divergences can be extended easily to the segments $\alpha=0$ or $\alpha=1$, and $z\geq1$. The additional relative entropy $\hat{D}^\T=\lim_{z\rightarrow\infty}\hat{D}_{\alpha,z}$, independent of $\alpha$, turns the set of all relative entropies into a compact space in the topology of point-wise convergence.}
\end{SCfigure}
    
    A result by Uhlmann \cite{Uhlmann1985} states that a pair of pure states $(\ketbrar{\alpha},\ketbrar{\beta})$ majorizes another pair of pure states $(\ketbrar{\alpha'},\ketbrar{\beta'})$ if and only if $|\braket{\alpha}{\beta}|\leq|\braket{\alpha'}{\beta'}|$. Note that in the expressions \eqref{eq:alphazexplicit} and \eqref{eq:DT}, the fidelity between pure states $|\braket{\alpha_i}{\beta_i}|$ also makes an appearance. 
    
    Recall that a pair $(\rho,\sigma)\in\mc F$ is called non-parallel if 
    \begin{equation}\label{eq:pu1}
        |\braket{\alpha_i}{\beta_i}|<1\quad\forall\;i\in\{1,\ldots,n\}. 
    \end{equation}
    To state our remaining results, we introduce the following additional property for a pair $(\rho,\sigma)\in\mc F$: 
    \begin{equation}\label{eq:pu2}
        \exists\;i\in\{1,\ldots,n\}\ \text{s.t.}\ \braket{\alpha_i}{\beta_i}=0\ \text{and}\ \ket{\alpha_i},\ket{\beta_i}\neq0.  
    \end{equation}

    The following Theorem gives sufficient and almost necessary conditions for exact large-sample and catalytic majorization. This result is established in Section \ref{sec:exactproof}.  
    
    \begin{theorem}\label{thm:LS}
        Let $(\rho,\sigma)$, $(\rho',\sigma')$ be two pairs of states in $\mc F$, where additionally $(\rho,\sigma)$ satisfies \eqref{eq:pu2}. If 
        \begin{align}
            \label{eq:LScondition1}&\hat{D}_{\alpha,z}(\rho\|\sigma)>\hat{D}_{\alpha,z}(\rho'\|\sigma')\ \text{when}\ \alpha\in[0,1],\ z\geq\max\{\alpha,1-\alpha\}, &\\
            \label{eq:LScondition2}&\hat{D}^\mathbb{T}(\rho\|\sigma)>\hat{D}^\mathbb{T}(\rho'\|\sigma'), 
        \end{align}            
        then $(\rho,\sigma)$ majorizes $(\rho',\sigma')$ both in the large-sample setting and in the catalytic setting, with a pair of catalyst states in $\mc F$. For the catalytic result, $(\rho,\sigma)$ does not need to satisfy \eqref{eq:pu2}. 
        
        Conversely, if $(\rho,\sigma)$ majorizes $(\rho',\sigma')$ in large samples or catalytically, then the above inequalities \eqref{eq:LScondition1} and \eqref{eq:LScondition2} hold non-strictly. 
    \end{theorem}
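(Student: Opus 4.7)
The plan is to apply the Vergleichsstellensätze of Fritz \cite{fritz2022,fritz2023} to the preordered commutative semiring built from $\mc F$, where addition is given by the block-diagonal sum compatible with the cq-structure, multiplication is tensor product, and the preorder is the single-shot relative majorization of \eqref{eq:maj}. These theorems translate strict inequalities on the full spectrum of monotone multiplicative functionals, together with their tropical analogues, directly into statements about large-sample and catalytic majorization, provided the semiring admits a power universal element. The strategy for the forward direction is therefore (a) compute the spectrum, (b) verify power universality using \eqref{eq:pu2}, and (c) invoke the theorem.

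For step (a), carried out in Section \ref{sec:spectrum} using the technical groundwork of Section \ref{sec:technical}, we identify every monotone semiring homomorphism $\mc F\to\R_{\geq 0}$. Multiplicativity under tensor products and additivity under direct sums, combined with monotonicity under all quantum channels that preserve the cq-structure, force any such homomorphism on $(\rho,\sigma)\in\mc F$ to take the form
\begin{equation*}
(\rho,\sigma)\mapsto \sum_{i=1}^n p_i^{\alpha}\,q_i^{1-\alpha}\,|\braket{\alpha_i}{\beta_i}|^{2z}
\end{equation*}
for some $\alpha\in[0,1]$ and $z\geq\max\{\alpha,1-\alpha\}$, the range of $z$ being pinned down by DPI as discussed after \eqref{eq:DT}. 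Taking $-\log$ with the prefactor of \eqref{eq:alphazexplicit} recovers $\hat{D}_{\alpha,z}$, and the tropical (derivation-type) extremal points of the spectrum produce $\hat{D}^\T$ via the max-expression in \eqref{eq:DT}; Uhlmann's criterion on pure-state pairs $(\ketbrar{\alpha},\ketbrar{\beta})$ is what seeds the dependence on $|\braket{\alpha_i}{\beta_i}|$ and confirms monotonicity.

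The main obstacle is step (b): constructing a power universal element. Condition \eqref{eq:pu2} is precisely what makes this possible. A component with $\braket{\alpha_i}{\beta_i}=0$ yields, after tensoring, arbitrarily small fidelities, and by Uhlmann's theorem such components can majorize any fixed pair of pure states; the discrete flag contribution can then be made to dominate any prescribed probability profile by taking further tensor powers. Without \eqref{eq:pu2} this construction fails, which explains the asymmetry in the theorem's hypothesis: for the weaker catalytic conclusion, a substitute multiplicative absorption property established in \cite{farooq2024,verhagen2024} suffices, so \eqref{eq:pu2} can be dropped in that case.

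The converse is routine. Applying $\mc C_n$ to $n$-fold tensor powers and using DPI together with additivity of $\hat{D}_{\alpha,z}$ gives $n\hat{D}_{\alpha,z}(\rho\|\sigma)\geq n\hat{D}_{\alpha,z}(\rho'\|\sigma')$; dividing by $n$ yields the non-strict inequality, and the same argument applies to $\hat{D}^\T$. In the catalytic setting, the catalyst contributions cancel on both sides by additivity, yielding the non-strict inequalities directly. Once the spectrum has been identified and power universality secured under \eqref{eq:pu2}, the Vergleichsstellensätze then package the two halves into the exact large-sample and catalytic majorization claims respectively.
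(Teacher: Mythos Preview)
Your overall strategy matches the paper's: apply Fritz's Vergleichsstellensatz (Theorem~\ref{thm:Fritz2022}) to the semiring $S_{\rm m.r.}$, after identifying the full spectrum of monotone homomorphisms (Proposition~\ref{prop:monhom}) and establishing power universality (Proposition~\ref{prop:pumr}). The converse via DPI and additivity is also right. However, several details in your sketch are off.

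First, your account of power universality inverts the roles of the two conditions. A component with $\braket{\alpha_i}{\beta_i}=0$ already has fidelity zero; tensoring does nothing further. What drives fidelities to zero under tensoring is condition~\eqref{eq:pu1} (non-parallel components with $0<|\braket{\alpha_i}{\beta_i}|<1$), which the paper first uses to show power universality in the smaller semiring $S_{\rm e.o.}$ (Proposition~\ref{prop:puS}). Condition~\eqref{eq:pu2} is then needed in $S_{\rm m.r.}$ for a different reason: it supplies disjointly supported classical parts, which under tensor powers generate the orthogonal blocks that an arbitrary target in $V_{\rm m.r.}$ may contain (Proposition~\ref{prop:pumr}). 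You also omit that \eqref{eq:pu1} is not assumed but is forced by the strict inequality~\eqref{eq:LScondition2}, since $\hat D^{\mathbb T}(\rho\|\sigma)>0$ implies $\max_i|\braket{\alpha_i}{\beta_i}|<1$.

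Second, $\hat D^{\mathbb T}$ is not a ``derivation-type'' object. It arises from the unique nondegenerate monotone homomorphism into $\mathbb T\R_+^{\rm op}$; the paper shows there are \emph{no} nontrivial monotone derivations on $S_{\rm m.r.}$ (Proposition~\ref{prop:monhom}), so item~(ii) of Theorem~\ref{thm:Fritz2022} is vacuous here.

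Third, the reason \eqref{eq:pu2} is unnecessary for the catalytic conclusion is structural: in Theorem~\ref{thm:Fritz2022}, conclusion~(a) (existence of a catalyst $a$) holds without any power-universality hypothesis on $x$; only conclusion~(b) (large-sample ordering) requires $x$ to be power universal. No separate ``multiplicative absorption'' argument from \cite{farooq2024,verhagen2024} is invoked.

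Finally, you do not mention the surjective norm homomorphism $\|\cdot\|=(\Tr[A],\Tr[B])$ with the property~\eqref{eq:surjectivehomomorphismproperties}, which is a required input to Theorem~\ref{thm:Fritz2022}; this is easy but must be checked.
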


    Theorem \ref{thm:approximateC} on asymptotic catalytic majorization presented in the Introduction can be extended to the following result that also gives conditions for asymptotic large-sample majorization. The proof can be found in Section \ref{sec:approximate}. 

    \begin{theorem}\label{thm:approximateLS}
        Let $(\rho,\sigma)$, $(\rho',\sigma')$ be two pairs of states in $\mc F$, where $\rho$, $\sigma$ are non-parallel (i.e. satisfy \eqref{eq:pu1}) and $\rho'$, $\sigma'$ do not commute. Consider the following statements: 
        \begin{enumerate}[(i)]
            \item $D_{\alpha,z}(\rho\|\sigma)\geq D_{\alpha,z}(\rho'\|\sigma')$ when $\alpha\in(0,1)$ and $z>\max\{\alpha,1-\alpha\}$. 
            \item For each $\varepsilon>0$, there exist a state $\rho'_\varepsilon$ such that $F(\rho'_\varepsilon,\rho')\geq1-\varepsilon$, and a quantum channel $\mc{C}_{\varepsilon,n}$ for $n\geq n_\varepsilon$ large enough such that $\mc{C}_{\varepsilon,n}(\rho^{\otimes n})=(\rho'_\varepsilon)^{\otimes n}$ and $\mc{C}_{\varepsilon,n}(\sigma^{\otimes n})=(\sigma')^{\otimes n}$. 
            \item For each $\varepsilon>0$, there exist a state $\rho'_\varepsilon$ such that $F(\rho'_\varepsilon,\rho')\geq1-\varepsilon$, a quantum channel $\mc C_\varepsilon$, and a catalyzer pair $(\tau_\varepsilon$, $\omega_\varepsilon)\in\mc F$ such that 
            \begin{equation}
                \mc C_\varepsilon(\rho\otimes\tau_\varepsilon)=\rho'_\varepsilon\otimes\tau_\varepsilon\quad\text{and}\quad \mc C_\varepsilon(\sigma\otimes\omega_\varepsilon)=\sigma'\otimes\omega_\varepsilon. 
            \end{equation}
            Statements (i) and (iii) are equivalent. If $(\rho,\sigma)$ furthermore satisfies \eqref{eq:pu2}, then (i), (ii) and (iii) are equivalent. 
        \end{enumerate}        
    \end{theorem}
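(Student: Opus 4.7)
The plan is to establish the implications (ii) $\Rightarrow$ (iii), (iii) $\Rightarrow$ (i), (i) $\Rightarrow$ (iii) under the basic hypotheses of the theorem, and (i) $\Rightarrow$ (ii) under the additional condition \eqref{eq:pu2}. The implication (ii) $\Rightarrow$ (iii) is immediate from the standard swap-trick producing catalysts from $n$-fold tensor powers, as used in the exact case of \cite{Duan2005}, noting that the class $\mc F$ is closed under tensor products.

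For (iii) $\Rightarrow$ (i), I would apply DPI to $\mc C_\varepsilon$ on the pair $(\rho\otimes\tau_\varepsilon,\sigma\otimes\omega_\varepsilon)$ and use tensor-additivity of $\hat{D}_{\alpha,z}$ to subtract $\hat{D}_{\alpha,z}(\tau_\varepsilon\|\omega_\varepsilon)$, which is finite since $(\tau_\varepsilon,\omega_\varepsilon)\in\mc F$ has some overlap. This yields $\hat{D}_{\alpha,z}(\rho\|\sigma)\geq \hat{D}_{\alpha,z}(\rho'_\varepsilon\|\sigma')$ for each $\varepsilon>0$. Sending $\varepsilon\to 0$ and using continuity of $\hat{D}_{\alpha,z}$ in its first argument (valid for $\alpha\in(0,1)$ and $z>\max\{\alpha,1-\alpha\}$ at pairs in $\mc F$, where the sum in \eqref{eq:alphazexplicit} is strictly positive) delivers (i).

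The principal work is (i) $\Rightarrow$ (iii): the approach is to perturb $(\rho',\sigma')$ slightly so that the non-strict inequalities of (i) become the strict inequalities \eqref{eq:LScondition1} and \eqref{eq:LScondition2} demanded by the exact catalytic part of Theorem \ref{thm:LS}. Since $\rho'$ and $\sigma'$ do not commute, there exists an index $i^*$ with $0<|\braket{\alpha'_{i^*}}{\beta'_{i^*}}|<1$; rotate $\ket{\alpha'_{i^*}}$ slightly toward $\ket{\beta'_{i^*}}$ to obtain $\rho'_\varepsilon$ (unchanged outside component $i^*$) with $F(\rho'_\varepsilon,\rho')\geq 1-\varepsilon$ and a strictly larger inner product with $\ket{\beta'_{i^*}}$ at component $i^*$. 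Inspection of \eqref{eq:alphazexplicit} then shows that this strictly increases the argument of the logarithm at every $\alpha\in[0,1]$ and every finite $z\geq\max\{\alpha,1-\alpha\}$, strictly decreasing $\hat{D}_{\alpha,z}(\rho'_\varepsilon\|\sigma')$. Combining with hypothesis (i), extended by continuity of $\hat{D}_{\alpha,z}$ in $(\alpha,z)$ from $\alpha\in(0,1),z>\max\{\alpha,1-\alpha\}$ to the closed parameter region, yields \eqref{eq:LScondition1}. For \eqref{eq:LScondition2} I would split on whether $\hat{D}^\T(\rho'\|\sigma')=0$: in that case, \eqref{eq:pu1} for $(\rho,\sigma)$ already gives $\hat{D}^\T(\rho\|\sigma)>0$ while the perturbation leaves the maximum inner product equal to $1$, so the strict $\hat{D}^\T$ inequality holds for free; otherwise $(\rho',\sigma')$ itself satisfies \eqref{eq:pu1} and I would choose $i^*$ to attain the maximum inner product, so the perturbation strictly increases that maximum and strictly decreases $\hat{D}^\T(\rho'_\varepsilon\|\sigma')$, which combined with hypothesis (i) at $z\to\infty$ gives the desired strict inequality. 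With both strict conditions in hand, Theorem \ref{thm:LS} produces a catalytic majorization $(\rho,\sigma)\to(\rho'_\varepsilon,\sigma')$ with catalysts in $\mc F$, establishing (iii); and when $(\rho,\sigma)$ additionally satisfies \eqref{eq:pu2}, the same construction together with the exact large-sample part of Theorem \ref{thm:LS} yields (ii).

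The main obstacle I anticipate is the perturbation step: one must verify that a single $\rho'_\varepsilon$ simultaneously turns every inequality in (i) strict across the entire non-compact parameter family $\alpha\in[0,1]$, $z\geq\max\{\alpha,1-\alpha\}$, including the $z\to\infty$ limit. The case split on $\hat{D}^\T(\rho'\|\sigma')$ sketched above, with $i^*$ chosen appropriately in each case, should handle this, but some care is still needed near the boundary $z=\max\{\alpha,1-\alpha\}$ and at the endpoints $\alpha\in\{0,1\}$, where the hypothesis must first be extended by continuity of $\hat{D}_{\alpha,z}$.
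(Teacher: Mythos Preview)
Your overall strategy matches the paper's: for (i) $\Rightarrow$ (ii),(iii) you perturb $(\rho',\sigma')$ so that the non-strict inequalities become strict and then invoke Theorem~\ref{thm:LS}; for the converse you use DPI, additivity, and a limiting argument. The paper perturbs \emph{all} indices in $J=\{i:0<|\braket{\alpha'_i}{\beta'_i}|<1\}$ by a common small angle $\eta$ rather than a single index $i^*$, which makes the $\hat D^\T$ case-split slightly cleaner (the argmax is automatically in $J$ when $\hat D^\T(\rho'\|\sigma')>0$, and perturbing all of $J$ moves it), but your single-index version with a case-dependent choice of $i^*$ is also correct.

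There is, however, a genuine gap in your route (ii) $\Rightarrow$ (iii) via the swap trick. The catalyst produced by the Duan-style construction has first component $\tau_\varepsilon=\bigoplus_{\ell=0}^{n-1}\rho^{\otimes\ell}\otimes(\rho'_\varepsilon)^{\otimes(n-1-\ell)}$ (up to normalization), and statement (ii) as written does \emph{not} require $\rho'_\varepsilon$ to lie in $\mc F$; it is an arbitrary state. Hence $\tau_\varepsilon$ need not be a cq-state with pure components, so the catalyzer pair need not be in $\mc F$ as (iii) demands. The paper sidesteps this by proving (ii) $\Rightarrow$ (i) directly---the argument is identical to your (iii) $\Rightarrow$ (i)---and then closing the loop via (i) $\Rightarrow$ (iii). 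You should reroute accordingly.

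A smaller but related point: in your (iii) $\Rightarrow$ (i) step, the appeal to ``continuity of $\hat D_{\alpha,z}$ in its first argument at pairs in $\mc F$'' is not quite enough, since again $\rho'_\varepsilon$ is not assumed to be in $\mc F$ and may even live on a larger Hilbert space than $\rho'$, so \eqref{eq:alphazexplicit} does not apply. The paper handles this by pinching $\rho'_\varepsilon$ onto $\hil'$ and its orthogonal complement and then estimating the remainder directly at the level of the homomorphisms $\Phi_{\alpha,z}$; some argument of this kind is needed.
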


    Note that the state $\rho'_\varepsilon$ in statements (ii) and (iii) above is allowed to come from a Hilbert space that is larger than the one associated to $\rho'$. Additionally, $(\rho'_\varepsilon,\sigma')$ need not be a pair in $\mc F$. However, from the construction of $\rho'_\varepsilon$ in the proof of the implications (i) $\Longrightarrow$ (ii) and (i) $\Longrightarrow$ (iii), it follows that $\rho'_\varepsilon$ can be chosen such that $(\rho'_\varepsilon,\sigma')\in\mc F$, and $\rho'_\varepsilon$ and $\rho'$ have the same support. 
    
    It might be possible to relax the assumption \eqref{eq:pu2} in Theorems \ref{thm:LS} and \ref{thm:approximateLS}, but this requires the study of a slightly different semiring. We will address this issue later in the Discussion. Note that, for the results on large-sample majorization, Theorem \ref{thm:approximateLS} requires $(\rho,\sigma)$ to satisfy both \eqref{eq:pu1} and \eqref{eq:pu2}, while Theorem \ref{thm:LS} only explicitly requires \eqref{eq:pu2}. However, \eqref{eq:pu1} is automatically satisfied for $(\rho,\sigma)$ in Theorem \ref{thm:LS} since the strict inequality \eqref{eq:LScondition2} implies \eqref{eq:pu1}. 

    One may wonder whether the set of relative entropies that give sufficient conditions for large-sample and catalytic majorization in the preceding Theorem is bigger than necessary. The following Theorem tells us that this is in fact not the case, since the set of relative entropies is minimal in the sense that we cannot remove any open subset from the range of parameters 
    \begin{equation}\label{eq:range}
        R:=\big\{\,(\alpha,z)\in\R^2\,|\,\alpha\in(0,1),\ z>\max\{\alpha,1-\alpha\}\,\big\}
    \end{equation}
    appearing in condition (i) of Theorem \ref{thm:approximateLS}. The proof is given in Section \ref{sec:minimal}. 
    \begin{theorem}\label{thm:minimal}
         Let $O$ be a non-empty open subset of $R$. There exist pairs $(\rho,\sigma)$, $(\rho',\sigma')$ in $\mc F$, where $\rho$, $\sigma$ are non-parallel and satisfy \eqref{eq:pu2}, and $\rho'$, $\sigma'$ do not commute, such that 
         \begin{equation}\label{eq:smallercondition}
             D_{\alpha,z}(\rho\|\sigma)\geq D_{\alpha,z}(\rho'\|\sigma')\ \text{when}\ (\alpha,z)\in R\setminus O, 
         \end{equation}
         but both (ii) and (iii) of Theorem \ref{thm:approximateLS} do not hold. 
    \end{theorem}

    An analogous result can be shown for the case of exact large-sample and catalytic majorization (Theorem \ref{thm:LS}): one cannot remove any open subset from the set of all relative entropies appearing in the conditions \eqref{eq:LScondition1} and \eqref{eq:LScondition2}. 

    We say that, given pairs $(\rho,\sigma)$ and $(\rho',\sigma')$, $r\geq0$ is an {\it achievable conversion rate} if $r\leq\liminf_{n\to\infty}m_n/n$ where $(m_n)_n$ is a sequence of natural numbers such that $\big(\rho^{\otimes n},\sigma^{\otimes n}\big)$ majorizes $\big((\rho')^{\otimes m_n},(\sigma')^{\otimes m_n}\big)$ for large enough $n\in\N$. The {\it optimal conversion rate} $r\big((\rho,\sigma)\to(\rho',\sigma')\big)$, namely the supremum of all the achievable conversion rates, is given by the following result, which is shown in Section \ref{sec:rate}. 

    \begin{theorem}\label{thm:rates}
        For pairs $(\rho,\sigma)$ and $(\rho',\sigma')$ of states in $\mc F$, where $\rho$, $\sigma$ are non-parallel and satisfy \eqref{eq:pu2}, the optimal conversion rate from $(\rho, \sigma)$ to $(\rho', \sigma')$ is 
        \begin{align}
        &r\big((\rho,\sigma)\to(\rho',\sigma')\big)=\inf\left\{\,\frac{D_{\alpha,z}(\rho\|\sigma)}{D_{\alpha,z}(\rho'\|\sigma')}\,\middle|\,\alpha\in(0,1),z>\max\{\alpha,1-\alpha\}\,\right\}&\label{eq:rate1}\\
        &\quad=\min\left\{\,\frac{\hat{D}_{\alpha,z}(\rho\|\sigma)}{\hat{D}_{\alpha,z}(\rho'\|\sigma')}\,\middle|\,\alpha\in[0,1],z\geq\max\{\alpha,1-\alpha\}\,\right\}\cup\left\{\frac{\hat{D}^\T(\rho\|\sigma)}{\hat{D}^\T(\rho'\|\sigma')}\right\}.&\label{eq:rate2} 
        \end{align}
    \end{theorem}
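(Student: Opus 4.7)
The plan is to derive the rate formula from Theorem \ref{thm:LS} (for achievability) and the data-processing inequality (for the converse), after first matching the two infimum expressions. For $(\rho,\sigma)\in\mc F$ the paper has already noted that $D_{\alpha,z}$ and $\hat D_{\alpha,z}$ differ only by a prefactor $c(\alpha,z)$ on the open parameter region $\alpha\in(0,1),z>\max\{\alpha,1-\alpha\}$, so this prefactor cancels in each ratio and the infimum in \eqref{eq:rate1} coincides with the infimum of $\hat D_{\alpha,z}(\rho\|\sigma)/\hat D_{\alpha,z}(\rho'\|\sigma')$ over the same open region. Continuity of $\hat D_{\alpha,z}(\rho\|\sigma)$ in $(\alpha,z)$ on the closed region $\alpha\in[0,1],z\geq\max\{\alpha,1-\alpha\}$, together with the pointwise limit $\hat D_{\alpha,z}\to\hat D^\T$ as $z\to\infty$, upgrades this to the minimum over the compact closed region adjoined with the single $z=\infty$ value, giving \eqref{eq:rate2}. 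From here I work with the $\hat D$-forms.

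For the converse, suppose $(m_n)_n$ is such that $(\rho^{\otimes n},\sigma^{\otimes n})$ majorizes $\bigl((\rho')^{\otimes m_n},(\sigma')^{\otimes m_n}\bigr)$ for all large $n$. Tensor-additivity and DPI for each $\hat D_{\alpha,z}$ and for $\hat D^\T$ give
\[
n\,\hat D_{\alpha,z}(\rho\|\sigma)\ \geq\ m_n\,\hat D_{\alpha,z}(\rho'\|\sigma'),
\]
and likewise with $\hat D^\T$. When the denominator is strictly positive this rearranges to $m_n/n\leq \hat D_{\alpha,z}(\rho\|\sigma)/\hat D_{\alpha,z}(\rho'\|\sigma')$; vanishing denominators contribute a ratio of $+\infty$ by convention and impose no constraint on the infimum. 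Taking $\liminf_n$ followed by the infimum over $(\alpha,z)$ and over $\hat D^\T$ yields $r\leq\min$ in \eqref{eq:rate2}.

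For achievability, let $r^{*}$ denote the minimum in \eqref{eq:rate2} and fix $0\leq r<r^{*}$; pick a rational $m/n$ with $r<m/n<r^{*}$. By tensor-additivity, the strict inequality $m/n<r^{*}$ becomes
\[
\hat D_{\alpha,z}(\rho^{\otimes n}\|\sigma^{\otimes n})\ >\ \hat D_{\alpha,z}\bigl((\rho')^{\otimes m}\|(\sigma')^{\otimes m}\bigr)
\]
uniformly in the closed region, and similarly for $\hat D^\T$. These are exactly the hypotheses \eqref{eq:LScondition1}--\eqref{eq:LScondition2} of Theorem \ref{thm:LS} applied to $(\rho^{\otimes n},\sigma^{\otimes n})$ versus $\bigl((\rho')^{\otimes m},(\sigma')^{\otimes m}\bigr)$, so the theorem yields large-sample majorization: for all sufficiently large $k$, $(\rho^{\otimes nk},\sigma^{\otimes nk})$ majorizes $((\rho')^{\otimes mk},(\sigma')^{\otimes mk})$. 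Setting $m_{nk}=mk$ and interpolating for the remaining indices by using slightly smaller rationals produces a sequence with $\liminf_N m_N/N\geq m/n>r$, so $r$ is achievable; letting $r\uparrow r^{*}$ closes the gap.

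The main obstacle I anticipate is that the large-sample branch of Theorem \ref{thm:LS} requires condition \eqref{eq:pu2} on the source, while \eqref{eq:pu2} holds in $(\rho^{\otimes n},\sigma^{\otimes n})$ only if it already holds in $(\rho,\sigma)$. The natural resolutions are either to run the entire achievability argument through the catalytic branch of Theorem \ref{thm:LS}, which is free of \eqref{eq:pu2}, combined with a separate argument that for pairs in $\mc F$ the catalytic and large-sample optimal rates coincide; or to perturb $(\rho,\sigma)$ to a nearby pair in $\mc F$ satisfying \eqref{eq:pu2}, apply the theorem, and use continuity of both bounds in the states to pass to the limit. Reconciling the combinatorial requirement \eqref{eq:pu2} with the asymptotic nature of the rate is where I expect the bulk of the technical work to sit.
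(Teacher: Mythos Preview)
Your approach is essentially the one the paper uses: the converse comes from additivity plus DPI, and achievability comes from turning a rational $r=p/q$ strictly below the minimum into strict inequalities for all homomorphisms and then invoking the Vergleichsstellensatz. The only cosmetic difference is that the paper appeals directly to Theorem~\ref{thm:Fritz2022} rather than routing through Theorem~\ref{thm:LS}; since Theorem~\ref{thm:LS} is itself just Theorem~\ref{thm:Fritz2022} specialized to $S_{\rm m.r.}$ via Propositions~\ref{prop:pumr} and~\ref{prop:monhom}, this makes no real difference.

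On the obstacle you flag concerning \eqref{eq:pu2}: you are right that the large-sample branch of Theorem~\ref{thm:Fritz2022}(b) needs $(\rho^{\otimes q},\sigma^{\otimes q})$ to be power universal, and that \eqref{eq:pu2} is not inherited under tensoring unless it already holds for $(\rho,\sigma)$. The paper's own proof does \emph{not} address this point either; it simply asserts that ``according to (the reformulation of) Theorem~\ref{thm:Fritz2022}'' the large-sample majorization follows, without verifying power universality of the source pair. So the issue you identify is not a defect of your argument relative to the paper's --- it is a genuine gap (or unstated hypothesis) in the paper's presentation of Theorem~\ref{thm:rates} itself. Your proposed workarounds (catalytic route plus an equivalence-of-rates argument, or a perturbation argument) are reasonable directions, but neither is carried out in the paper.
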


    The additional relative entropies $\hat{D}_{\alpha,z}$ for $\alpha=0,1$ and the limit point $\hat{D}^\T$ make the set of relative entropies appearing in \eqref{eq:rate2} compact in the topology of pointwise convergence \cite[Proposition 8.5]{fritz2022}. This is why in \eqref{eq:rate2} we are able to take the minimum, instead of taking the infimum when using the original definition of the $\alpha$-$z$ relative entropies $D_{\alpha,z}$ in \eqref{eq:rate1}. 


    \section{Mathematical Background}\label{sec:bg}

    \subsection{The Vergleichsstellens\"atze}\label{sec:vss}

    To derive our results, we need some mathematical machinery that comes in a real-algebraic form and involves preordered semirings in particular. The minimal background required to use these techniques will be presented in this section. All definitions and results presented here are from \cite{fritz2022,fritz2023}, where one can find further details. 

Throughout this work, we follow the convention $\N=\{1,2,\ldots\}$ and denote \sloppy{${[n]:=\{1,\ldots,n\}}$} for all $n\in\N$. We will only study \emph{commutative} algebraic structures. A tuple $(S,+,\cdot,0,1,\preceq)$ is called a {\it preordered semiring} if $(S,+,0)$ and $(S,1,\cdot)$ are commutative semigroups where the multiplication distributes over the addition and $\preceq$ is a preorder (a binary relation which is reflexive and transitive) satisfying
\begin{equation}
x\preceq y\ \Rightarrow\ \left\{\begin{array}{rcl}
x+a&\preceq&y+a,\\
xa&\preceq&ya,
\end{array}\right.
\end{equation}
for all $a\in S$. As above, we usually omit the multiplication dot between elements in the semiring. We denote by $\sim$ the equivalence relation generated by $\preceq$, i.e. $x\sim y$ if and only if there are $z_1,\ldots,z_n\in S$ such that
\begin{equation}
x\preceq z_1\succeq z_2\preceq\cdots\succeq z_n\preceq y.
\end{equation}
The preordered semiring $S$ is of {\it polynomial growth} if it has a {\it power universal} $u\in S$, i.e.
\begin{equation}\label{eq:puproperty}
x\preceq y\quad\Rightarrow\quad\exists k\in\N:\ y\preceq xu^k.
\end{equation}
A preordered semiring $S$ is a {\it preordered semidomain} if
\begin{equation}
\begin{array}{rcl}
xy=0&\Rightarrow&x=0\ {\rm or}\ y=0,\\
0\preceq x\preceq 0&\Rightarrow&x=0.
\end{array}
\end{equation}
Moreover, $S$ is {\it zerosumfree} if $x+y=0$ implies $x=0=y$.
	
Given preordered semirings $S$ and $T$, we say that a map $\Phi:S\to T$ is a {\it monotone homomorphism} if
\begin{itemize}
\item $\Phi(x+y)=\Phi(x)+\Phi(y)$ for all $x,y\in S$ (additivity),
\item $\Phi(xy)=\Phi(x)\Phi(y)$ for all $x,y\in S$ (multiplicativity),
\item $x\preceq y$ $\Rightarrow$ $\Phi(x)\preceq\Phi(y)$ (monotonicity), and
\item $\Phi(0)=0$ and $\Phi(1)=1$.
\end{itemize}
We say that such a monotone homomorphism is {\it degenerate} if $x\preceq y$ implies $\Phi(x)=\Phi(y)$. Otherwise $\Phi$ is {\it non-degenerate}. For our results we need monotone homomorphisms with values in special semirings. These are the following:
\begin{itemize}
\item $\R_+$: The half-line $[0,\infty)$ equipped with the natural addition, multiplication, and total order.
\item $\R_+^{\rm op}$: The same as above but with the reversed order. Together $\R_+$ and $\R_+^{\rm op}$ are often called {\it temperate reals}.
\item $\T\R_+$: The half-line $[0,\infty)$ equipped with the natural multiplication, total order, and the tropical sum $x+y=\max\{x,y\}$.
\item $\T\R_+^{\rm op}$: The same as above but with the reversed order. Together $\T\R_+$ and $\T\R_+^{\rm op}$ are often called {\it tropical reals}.
\end{itemize}
Suppose that $S$ is a preordered semiring and that $\Phi:S\to\R_+$ is a monotone homomorphism. We say that an additive map $\Delta:S\to\R$ is a {\it derivation at $\Phi$} or a {\it $\Phi$-derivation} if it satisfies the {\it Leibniz rule}
\begin{equation}
\Delta(xy)=\Delta(x)\Phi(y)+\Phi(x)\Delta(y)
\end{equation}
for all $x,y\in S$. We are mainly only interested in derivations at degenerate homomorphisms that are also monotone, i.e. satisfy 
\begin{equation}
    x\preceq y\quad\Longrightarrow\quad\Delta(x)\leq\Delta(y). 
\end{equation}


Several results collectively called the ``Vergleichsstellens\"{a}tze'' have been derived in \cite{fritz2022}. Of all of them, we will need the following version:
	
\begin{theorem}[Based on Theorem 8.6 in \cite{fritz2022}]\label{thm:Fritz2022}
Let $S$ be a zerosumfree preordered semidomain with a power universal element $u$. Assume that for some $d\in\N$ there is a surjective homomorphism $\|\cdot\|:S\to\R_{>0}^d\cup\{(0,\ldots,0)\}$\footnote{Note that $\R_{>0}^d\cup\{(0,\ldots,0)\}$ with component-wise addition and multiplication forms a semiring.} with trivial kernel and such that
\begin{equation}\label{eq:surjectivehomomorphismproperties}
a\succeq b\ \Rightarrow\ \|a\|=\|b\|\quad {\rm and} \quad \|a\|=\|b\|\ \Rightarrow\ a\sim b.
\end{equation}
Denote the component homomorphisms of $\|\cdot\|$ by $\|\cdot\|_{(j)}$, $j=1,\ldots,d$. Let $x,y\in S\setminus\{0\}$ with $\|x\|=\|y\|$. If 
\begin{enumerate}[(i)]
\item for every $\mb K\in\{\R_+,\R_+^{\rm op},\T\R_+,\T\R_+^{\rm op}\}$ and every non-degenerate monotone homomorphism $\Phi : S \to \mb K$ with trivial kernel, we have $\Phi(x) > \Phi(y)$, and
\item $\Delta(x) > \Delta(y)$ for every monotone $\|\cdot\|_{(j)}$-derivation\footnote{\label{ftn:interchangeability}In fact, this inequality only has to be verified for every derivation up to interchangeability: two derivations of some degenerate monotone homomorphism are called \emph{interchangeable} if their difference takes a (possibly different) constant value on every equivalence class of the relation $\sim$.} $\Delta : S \to \R$ with $\Delta(u) = 1$ for all component indices $j = 1,\ldots,d$,
\end{enumerate}
then
\begin{enumerate}
\item[(a)] there exists a nonzero $a\in S$ such that $ax\succeq ay$, and 
\item[(b)] if additionally $x$ is power universal, then $x^n \succeq y^n$ for all sufficiently large $n\in\N$. 
\end{enumerate}
Conversely if either of these properties holds for any $n$ or $a$, then the above inequalities hold non-strictly.
\end{theorem}

Large-sample ordering as in (b) in the above Theorem implies catalytic ordering as in (a), with a catalyst of the form $a=\sum_{\ell=0}^{n-1} x^\ell y^{n-1-\ell}$ for a sufficiently large $n\in\N$. This was shown in \cite{Duan2005} when $x,y$ are probability vectors, but the proof extends directly to the more abstract setting considered here. The converse implication typically does not holds; Theorem 3 of \cite{Feng_et_al_2006} provides a recipe for deriving counter examples in a particular setting.

 \subsection{Jordan's Lemma}

In our earlier classical results in \cite{farooq2024,verhagen2025}, we were aided by the fact that tuples of classical probability vectors can be jointly decomposed into single-entry tuples. This enabled us to utilize Theorem \ref{thm:Fritz2022} in a simple manner directly in the classical case in finding the monotone homomorphisms and derivations since these maps can now be decomposed into simpler maps on smaller spaces. A similar decomposition is clearly impossible in the case of general pairs of quantum states. However, in some settings we may proceed in a way similar to the classical case. Namely, Jordan's Lemma \cite{Jordan1875} says that we may jointly decompose a pair of projections so that the components are of rank at most 1:
    
	\begin{lemma}[Jordan's Lemma]
		Let $\Hil$ be a finite-dimensional linear space and $A,B$ projections in $\Hil$ (i.e. two operators on $\Hil$ such that $A^2=A=A^*,B^2=B=B^*$). Then $A,B$ can be simultaneously block-diagonalized with blocks of dimension $1$ or $2$. In the case of dimension $2$, the corresponding blocks in $A,B$ have rank $1$. More precisely, there exists a decomposition $\Hil=\bigoplus_{i=1}^n\Hil_i$, for some $n\geq1$, where the $\Hil_i$ are at most $2$ dimensional linear spaces, such that 
		\begin{equation}\label{eq:Jordanform}
            A=\bigoplus_{i=1}^n A_i,\quad B=\bigoplus_{i=1}^n B_i, 
		\end{equation}
        where the $A_i$, $B_i$ are projections of rank at most $1$ in $\Hil_i$. 
	\end{lemma}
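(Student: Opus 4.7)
The plan is to argue by induction on $n=\dim\Hil$, at each step exhibiting a subspace $W\subseteq\Hil$ of dimension $1$ or $2$ that is jointly invariant under $A$ and $B$ and on which the restrictions have the claimed ranks; the orthogonal complement $W^\perp$ is then automatically jointly invariant (since $A$ and $B$ are self-adjoint), and the induction hypothesis applied to it completes the decomposition \eqref{eq:Jordanform}. The base case $n\leq 2$ is trivial, as one can just take $\Hil_1=\Hil$.

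For the inductive step, I would first dispose of common eigenvectors: if there is a nonzero $v\in\Hil$ with $Av=av$ and $Bv=bv$ (where $a,b$ necessarily lie in $\{0,1\}$ because $A$, $B$ are projections), then $W:=\C v$ is a $1$-dimensional jointly invariant subspace and we are done for this step. Otherwise, I would produce a $2$-dimensional jointly invariant subspace via the self-adjoint operator $T:=ABA$, which preserves $\mathrm{ran}(A)$. Applying the spectral theorem to $T|_{\mathrm{ran}(A)}$ gives a unit vector $v\in\mathrm{ran}(A)$ with
\[
Av=v,\qquad ABv=\mu v
\]
for some $\mu\in\R$. Setting $W:=\mathrm{span}\{v,Bv\}$, invariance is immediate: $A$ sends $v\mapsto v$ and $Bv\mapsto \mu v$, while $B$ sends $v\mapsto Bv$ and $Bv\mapsto Bv$. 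A short calculation using $B^2=B$ shows that if $Bv$ were a scalar multiple of $v$, then $v$ would itself be an eigenvector of $B$, contradicting the absence of a common eigenvector; hence $\dim W=2$ and the ranges of $A|_W$, $B|_W$ are the one-dimensional subspaces $\C v$ and $\C Bv$ respectively. Since $A$, $B$ are self-adjoint projections whose restrictions to the invariant subspace $W$ remain self-adjoint and idempotent, the restrictions $A|_W$, $B|_W$ are genuine projections of rank exactly one.

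The main subtlety I expect is simply keeping track of degenerate cases where $A$ or $B$ is $0$ or $\I$, so that $\mathrm{ran}(A)$ or $\mathrm{ran}(B)$ is trivial and $T$ furnishes no useful eigenvector; but in each such case any eigenvector of the other projection is automatically a common eigenvector of the pair, so the argument lands safely in the first case. Once this inductive step is in place, iterating the construction on successive orthogonal complements yields the desired orthogonal block decomposition of $\Hil$ with blocks of dimension $1$ or $2$ and with rank-one restrictions on every $2$-dimensional block.
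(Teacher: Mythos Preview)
The paper does not supply a proof of Jordan's Lemma; it is stated with a citation to Jordan's original work and used as a known fact. Your inductive argument via an eigenvector of $T=ABA$ restricted to $\mathrm{ran}(A)$ is the standard route and is correct in substance: the invariance of $W=\mathrm{span}\{v,Bv\}$ is verified as you say, the self-adjointness of $A,B$ passes to the invariant blocks, and the rank constraints on the $2$-dimensional block follow because a rank-$0$ or rank-$2$ restriction of either projection would force a common eigenvector.

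There is one small slip in the base case. Declaring $n\leq 2$ trivial by taking $\Hil_1=\Hil$ does not work when $\dim\Hil=2$ and one of the projections is $0$ or $\I$, since then that block would have rank $0$ or $2$ rather than rank $1$ as the statement requires for $2$-dimensional blocks. The fix is immediate: take the base case to be $\dim\Hil\leq 1$ and let your inductive step handle $n=2$ as well. At $n=2$, either a common eigenvector exists and you split off a $1$-dimensional piece, or neither $A$ nor $B$ is $0$ or $\I$ (each such case would produce a common eigenvector), so both have rank exactly $1$ and $W=\mathrm{span}\{v,Bv\}=\Hil$ already meets the rank requirement. With this cosmetic adjustment the induction is complete.
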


    See \cite{Bottcher2010} for a modern review of the theory of two projections. Jordan's Lemma lets us decompose a pair of flat states, i.e., two states $\rho=\tr[P]^{-1}P$ and $\sigma=\tr[Q]^{-1}Q$, where $P$ and $Q$ are projections, into at most rank-1 components. This motivates us to construct the smallest preordered semiring that contains all pairs of flat states, for which we can find all the monotone homomorphisms and derivations and hence derive our results on large-sample and catalytic majorization using Theorem \ref{thm:Fritz2022}. 

    \section{Analyzing the Semirings}\label{sec:technical}

    We will first define the preordered semirings applicable to our setting, and then derive the precise form that monotone homomorphisms and derivations have. These results are partly based on previous results on classical matrix majorization in \cite{farooq2024,verhagen2025}. 

    \subsection{Defining the Semirings}\label{sec:definingthesemirings}
    We consider the set $V$ consisting of all pairs of positive semi-definite matrices that are mutually block-diagonalizable into blocks of rank at most $1$. More precisely, $(A,B)\in V$ when $A,B$ are positive semi-definite operators on a finite-dimensional Hilbert space
    \begin{equation}
        \Hil=\bigoplus_{i=1}^{n}\Hil_i, 
    \end{equation}
    where $\Hil_1,\ldots,\Hil_n$, are $n\geq1$ Hilbert spaces (these spaces may all vary and also their number $n$ may vary), such that $A,B$ admit a decomposition  
    \begin{equation}\label{eq:Vpair}
        A=\bigoplus_{i=1}^{n}\ketbra{\alpha_i}{\alpha_i},\quad B=\bigoplus_{i=1}^{n}\ketbra{\beta_i}{\beta_i}, 
    \end{equation}
    where $\ket{\alpha_i},\ket{\beta_i}\in\Hil_i$ are vectors (possibly equal to the zero-vector). 
    
    We define the set of \emph{minimal restrictions} pairs $V_{\rm m.r.}\subsetneq V$ to be those $(A,B)\in V$ such that either $A$ and $B$ are both the $0$-operator in their associated Hilbert space, or $A$ and $B$ have some overlap in the sense that 
    \begin{equation}\label{eq:cscondition}
        \braket{\alpha_i}{\beta_i}\neq0\text{ for some }i\in[n]. 
    \end{equation}
    Also, we define the set of \emph{everywhere overlapping} pairs $V_{\rm e.o.}\subsetneq V_{\rm m.r.}$ to be those $(A,B)\in V_{\rm m.r.}$ such that, 
    for any $i\in[n]$, we have either 
    \begin{equation}
        \braket{\alpha_i}{\beta_i}\neq0 
    \end{equation}
    or 
    \begin{equation}
        \ket{\alpha_i}=0=\ket{\beta_i}. 
    \end{equation}
    Note that any pair of flat states (in particular, pair of pure states) that are at least partially overlapping is in $V$. 

    $V$ is closed under the $\boxplus$ and $\boxtimes$ operations, defined by 
    \begin{equation}
        (A,B)\boxplus(C,D):=(A\oplus C,B\oplus D),\quad(A,B)\boxtimes(C,D):=(A\otimes C,B\otimes D), 
    \end{equation}
    as one easily sees. We say that $(A,B)\in V$ with associated Hilbert space $\Hil$ majorizes $(C,D)\in V$ with associated Hilbert space $\Hil'$ and write $(A,B)\succeq(C,D)$ if and only if there exists a quantum channel $\Cha:\Lin(\Hil)\rightarrow\Lin(\Hil')$ such that $\Cha(A)=C$ and $\Cha(B)=D$. 

    We define the equivalence relation $\approx$ as follows: Let $(A,B),(A',B')\in V$ with associated Hilbert spaces $\Hil$, respectively $\Hil'$, and write $(A,B)\approx(A',B')$ if and only if they are equal up to a permutation of the component Hilbert spaces $\Hil_i$, adding/removing copies of $(0,0)$, or by transformations of the form 
    \begin{equation}
        V_i\ketbrar{\alpha_i}V_i^*=W_i\ketbrar{\alpha'_i}W_i^*,\quad V_i\ketbrar{\beta_i}V_i^*=W_i\ketbrar{\beta'_i}W_i^*, 
    \end{equation}
    where the $V_i,W_i$ are isometries from $\Hil_i,\Hil'_i$ into another Hilbert space. 
    For $(A,B)\in V$, we denote by $[(A,B)]$ the $\approx$-equivalence class of $(A,B)$. Define $S_{\rm m.r.}:=V_{\rm m.r.}/\!\approx$. The operations $\boxplus$ and $\boxtimes$ induce well-defined operations on $S_{\rm m.r.}$. These operations are now also commutative: for $\boxplus$, this follows from the freedom to permute the Hilbert spaces $\Hil_i$, and for $\boxtimes$, this can be seen by considering unitaries that swap two vectors in a tensor product. Also, the preorder $\succeq$ induces a preorder on $S_{\rm m.r.}$. In conclusion, $(S_{\rm m.r.},\boxplus,0:=[(0,0)],\boxtimes,1:=[(1,1)],\succeq)$ is a preordered semiring. We may also define the preordered semiring based on $S_{\rm e.o.}=V_{\rm e.o.}/\!\approx$ in the same way. In the sequel, we will often treat elements of $V_{\rm m.r.}$ or $V_{\rm e.o.}$ as elements of the semirings $S_{\rm m.r.}$ or $S_{\rm e.o.}$, i.e.,\ we usually ignore the $\approx$-equivalence classes, and simply write $(A,B)=(A',B')$ when $(A,B)\approx(A',B')$ for $(A,B),(A',B')\in V_{\rm m.r.}$ or $V_{\rm e.o.}$. We will also treat functions (such as the monotone homomorphisms) on the semirings as functions on $V_{\rm m.r.}$ or $V_{\rm e.o.}$; we simply assume that they are constant on each equivalence class. 

    In the rest of this paper, we will usually write a pair $(A,B)$ in $V_{\rm m.r.}$ or $V_{\rm e.o.}$ as follows: 
    \begin{equation}\label{eq:Vpair2}
        A=\bigoplus_{i=1}^{n}p_i\ketbra{\alpha_i}{\alpha_i},\quad B=\bigoplus_{i=1}^{n}q_i\ketbra{\beta_i}{\beta_i}, 
    \end{equation}
    where, for $i=1,\ldots,n$, the vector $\ket{\alpha_i}$ is normalized when $p_i>0$ and equal to the zero vector when $p_i=0$, and similarly for the $\ket{\beta_i}$, $q_i$. Note that when $(A,B)$ is a pair of quantum states, meaning that $\tr[A]=1=\tr[B]$, then $p_1+\ldots+p_n=1=q_1+\ldots+q_n$, i.e. the $p_i$ and the $q_i$ are both finite probability distributions. In this case, the pair $(A,B)$ is  essentially the same as the pair $(\rho,\sigma)$ in \eqref{eq:pairform}, but without the classical flags $\ketbrar{i}$ and using the direct product instead of the sum, since the component states $\ketbrar{\alpha_i}$ are now associated to different Hilbert spaces $\Hil_i$. One sees that such pairs are precisely those in $\mc F$, i.e. cq-states with pure components that have some overlap. 

    In \cite{farooq2024}, we discussed a semiring of $d$-tuples of vectors ($d\in\N$) with positive entries, where the preorder was defined by the existence of a stochastic map between the columns. In \cite{verhagen2025}, Section 3, we considered similar $d$-tuples, where entries are also allowed to equal zero. For the case $d=2$, our current semiring $S_{\rm e.o.}$ contains the former semiring, and the semiring $S_{\rm m.r.}$ contains the latter. More precisely, in both $S_{\rm e.o.}$ and $S_{\rm m.r.}$, the earlier studied semirings correspond to the subset of those $(A,B)$ in $S_{\rm e.o.}$ or $S_{\rm m.r.}$ where $A$ and $B$ commute. This means that we already know something about the monotone homomorphisms of $S_{\rm e.o.}$ and $S_{\rm m.r.}$ based on our earlier work, and we will use this knowledge in our further analysis. 

    We introduce some notation that will make the connection with the previously studied semirings in \cite{farooq2024,verhagen2025}, for $d=2$, more clear. Let $(A,B)$ be a pair as in \eqref{eq:Vpair2}. If it has a component $(p_i\ketbrar{\alpha_i},q_i\ketbrar{\beta_i})$ such that $|\braket{\alpha_i}{\beta_i}|=1$ for some $i$, then we may regard this as a ``classical'' component and we simply denote it as $(p_i,q_i)$, which is allowed by the equivalence relation defined on the semirings. Also, when $\braket{\alpha_i}{\beta_i}=0$, but $\ket{\alpha_i},\ket{\beta_i}\neq0$, we denote it as $(p_i,0)\boxplus(0,q_i)$. Similarly, when $\ket{\alpha_i}=0$, respectively $\ket{\beta_i}=0$, we denote the component as $(0,q_i)$, respectively $(p_i,0)$. As an example, see \eqref{eq:preparation} later on. 

       
   \subsection{Power Universal Elements}
    We will identify elements in the semirings $S_{\rm m.r.}$ and $S_{\rm e.o.}$ that are power universal. First, we present the following useful result due to Uhlmann \cite{Uhlmann1985}. 
    \begin{proposition}[Uhlmann]\label{propo:Uhlmann}
        Let $\ket{\alpha_1},\ket{\beta_1}\in\Hil_1$ and $\ket{\alpha_2},\ket{\beta_2}\in\Hil_2$ be normalized vectors in some Hilbert spaces $\Hil_1,\Hil_2$. Then 
        \begin{equation}
            (\,\ketbrar{\alpha_1},\ketbrar{\beta_1}\,)\succeq(\,\ketbrar{\alpha_2},\ketbrar{\beta_2}\,)\quad\Longleftrightarrow\quad|\braket{\alpha_1}{\beta_1}|\leq|\braket{\alpha_2}{\beta_2}|. 
        \end{equation}
    \end{proposition}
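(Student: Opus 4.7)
I plan to prove the two directions separately. The forward implication is the standard monotonicity argument: if $\mc C$ is a channel with $\mc C(\ketbrar{\alpha_1})=\ketbrar{\alpha_2}$ and $\mc C(\ketbrar{\beta_1})=\ketbrar{\beta_2}$, then by the data-processing inequality for fidelity, $F(\ketbrar{\alpha_1},\ketbrar{\beta_1})\le F(\mc C(\ketbrar{\alpha_1}),\mc C(\ketbrar{\beta_1}))=F(\ketbrar{\alpha_2},\ketbrar{\beta_2})$. Since the fidelity between pure states reduces to $|\braket{\alpha}{\beta}|$, this gives $|\braket{\alpha_1}{\beta_1}|\le|\braket{\alpha_2}{\beta_2}|$.

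For the reverse implication, suppose $|\braket{\alpha_1}{\beta_1}|\le|\braket{\alpha_2}{\beta_2}|$. The plan is to construct the channel via a Stinespring dilation. After absorbing phases and applying appropriate unitaries we may reduce to the canonical form $\ket{\alpha_i}=\ket{0}$ and $\ket{\beta_i}=c_i\ket{0}+s_i\ket{1}$ with $c_i=|\braket{\alpha_i}{\beta_i}|\ge 0$, $s_i=\sqrt{1-c_i^2}\ge 0$, and we seek an isometry $V:\C^2\to\C^2\otimes\C^2$ with $V\ket{0}=\ket{0}\otimes\ket{e_0}$ and $V\ket{\beta_1}=\ket{\beta_2}\otimes\ket{e_1}$ for suitable unit vectors $\ket{e_0},\ket{e_1}$ in $\C^2$; the desired channel is then $\mc C(\cdot)=\tr_2(V\cdot V^*)$. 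The isometry condition $\braket{V0}{V\beta_1}=\braket{0}{\beta_1}$ forces $\braket{e_0}{e_1}=c_1/c_2$, and since by hypothesis $c_1/c_2\in[0,1]$ (when $c_2>0$), such unit vectors in $\C^2$ exist. A short check, by computing $\|V\ket{1}\|^2$ using $V\ket{1}=\frac{1}{s_1}(\ket{\beta_2}\otimes\ket{e_1}-c_1\ket{0}\otimes\ket{e_0})$, confirms that $V$ extends to a genuine isometry on all of $\C^2$.

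The boundary cases are minor: if $c_2=0$ then the hypothesis forces $c_1=0$ and the two pairs are unitarily equivalent; if $s_1=0$ then $\ket{\beta_1}$ is parallel to $\ket{\alpha_1}$, $c_1=1$ forces $c_2=1$, and again a unitary suffices. The main conceptual obstacle, beyond bookkeeping of the canonical form, is recognizing that the ratio $c_1/c_2$ is exactly the overlap that must be realized in the environment register; once this is seen, the construction is essentially forced and the remaining content is a Gram--Schmidt-style verification that the defined action on $\{\ket{0},\ket{\beta_1}\}$ extends to a genuine isometry on $\C^2$.
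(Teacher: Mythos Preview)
Your argument is correct. The forward direction via the data-processing inequality for fidelity is standard, and the Stinespring construction for the converse is clean: the key identity $\langle e_0|e_1\rangle=c_1/c_2$ is exactly what is needed, and your verification that $\|V\ket{1}\|=1$ and $\langle V0|V1\rangle=0$ goes through as described. One minor quibble: in the boundary case $c_2=0$ (hence $c_1=0$) the phrase ``unitarily equivalent'' is slightly loose when $\Hil_1$ and $\Hil_2$ have different dimensions, but of course a channel (e.g.\ measure in the $\{\ket{\alpha_1},\ket{\beta_1}\}$ basis and re-prepare) does the job.

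As for comparison: the paper does not actually prove this proposition. It is quoted as a known result due to Uhlmann~\cite{Uhlmann1985} and used as a black box throughout Section~\ref{sec:technical}. So there is no ``paper's own proof'' to compare against; your self-contained argument is a welcome addition rather than a deviation.
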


    Using this proposition, we first prove a characterization for certain elements in $S_{\rm e.o.}$ that are power universal. 

    \begin{proposition}\label{prop:puS}
    A pair $(A,B)$ of (normalized) quantum states of the form \eqref{eq:Vpair2} such that $0<|\braket{\alpha_i}{\beta_i}|<1$ for all $i=1,\ldots,n$ is a power universal of $S_{\rm e.o.}$.
    \end{proposition}

    Note that the above condition for the power universals of $S_{\rm e.o.}$ essentially means that all the components of the two cq-states are, in a sense, purely quantum.

    \begin{proof}
    Let us first consider a simpler case where we show that a pair $(\ketbrar{\alpha},\ketbrar{\beta})$ of pure states such that $0<|\braket{\alpha}{\beta}|<1$ is a power universal of $S_{\rm e.o.}$. To this end, let us fix $(A',B')\in S_{\rm e.o.}$ such that $\tr[A']=1=\tr[B']$, where
    \begin{equation}\label{eq:a'b'}
        A'=\bigoplus_{i=1}^{n'}p'_i\ketbrar{\alpha'_i},\quad B'=\bigoplus_{i=1}^{n'}q'_i\ketbrar{\beta'_i}.
    \end{equation}
    Since we are within $S_{\rm e.o.}$, we may assume that $p'_i,q'_i>0$ for all $i=1,\ldots,n'$. We next show that 
    \begin{equation}
        (\ketbrar{\alpha},\ketbrar{\beta})^{\boxtimes m}\succeq (A',B')
    \end{equation}    
    for $m\in\N$ sufficiently large. 

    Let us denote $F:=|\braket{\alpha}{\beta}|^2$. Let $m_0\in\N$ be such that $F^m p'_i<q'_i$ for all $m\geq m_0$ and $i=1,\ldots,n'$. Let $\ket{\alpha_m^\perp}$ be a unit vector in the 2-dimensional Hilbert space $\mc M_m$ spanned by $\ket{\alpha}^{\otimes m}$ and $\ket{\beta}^{\otimes m}$ which is perpendicular to $\ket{\alpha}^{\otimes m}$ for all $m\in\N$. It follows that we may define, for all $m\geq m_0$, the positive-operator-valued measure $E^m=(E^m_i)_{i=1}^{n'}$ on $\mc M_m$ through
    \begin{equation}
    E^m_i=p'_i\ketbrar{\alpha}^{\otimes m}+\frac{q'_i-F^mp'_i}{1-F^m}\ketbrar{\alpha_m^\perp}.
    \end{equation}
    In what follows, we write, e.g.,\ $|\alpha^{\otimes m}\rangle:=|\alpha\rangle^{\otimes m}$. Through a straightforward calculation, we find that
    \begin{equation}
    \left\langle\alpha^{\otimes m}\middle|E^m_i\middle|\alpha^{\otimes m}\right\rangle=p'_i,\qquad \left\langle\beta^{\otimes m}\middle|E^m_i\middle|\beta^{\otimes m}\right\rangle=q'_i.
    \end{equation}
    Let us define the unit vectors $\big|\tilde{\beta}_i^m\big\rangle=(q'_i)^{-1/2}(E^m_i)^{1/2}\ket{\beta}^{\otimes m}$. It is easily seen that
    \begin{equation}
    \big\langle\alpha^{\otimes m}\big|\tilde{\beta}_i^m\big\rangle
    =\sqrt{p'_i/q'_i}\braket{\alpha}{\beta}^m\to 0
    \end{equation}
    as $m\to\infty$. Since $|\braket{\alpha'_i}{\beta'_i}|>0$ for $i=1,\ldots,n'$, there is $m_1\in\N$, $m_1\geq m_0$, such that, for any $m\geq m_1$ and $i=1,\ldots,n'$, $|\big\langle\alpha^{\otimes m}\big|\tilde{\beta}_i^m\big\rangle|\leq|\braket{\alpha'_i}{\beta'_i}|$. Then, according to Proposition \ref{propo:Uhlmann}, for any $m\geq m_1$ and $i=1,\ldots,n'$, there is a channel $\mc C^m_i$ such that $\mc C^m_i(\ketbrar{\alpha}^{\otimes m})=\ketbrar{\alpha'_i}$ and $\mc C^m_i\big(\big|\tilde{\beta}_i^m\big\rangle\big\langle\tilde{\beta}_i^m\big|\big)=\ketbrar{\beta'_i}$. Thus, when $m\geq m_1$, we may define the channel $\mc C_m$ through
    \begin{equation}
    \mc C_m(\tau)=\bigoplus_{i=1}^{n'}\mc C_i^m\big((E^m_i)^{1/2}\tau(E_i^m)^{1/2}\big),
    \end{equation}
    and it is seen through a simple calculation that
    \begin{align}
    \mc C_m(\ketbrar{\alpha}^{\otimes m})&=\bigoplus_{i=1}^{n'}p'_i\ketbrar{\alpha'_i}=A',\\
    \mc C_m(\ketbrar{\beta}^{\otimes m})&=\bigoplus_{i=1}^{n'}q'_i\ketbrar{\beta'_i}=B'.
    \end{align}

    Let us next assume that the state pair $(A,B)$ is as in \eqref{eq:Vpair2} with $0<|\braket{\alpha_i}{\beta_i}|<1$ for $i=1,\ldots,n$. We are free to assume that $|\braket{\alpha_1}{\beta_1}|^2$ is the greatest of the fidelities $|\braket{\alpha_i}{\beta_i}|^2$. According to Proposition \ref{propo:Uhlmann}, we have channels $\mc D_i$ such that $\mc D_i(\ketbrar{\alpha_i})=\ketbrar{\alpha_1}$ and $\mc D_i(\ketbrar{\beta_i})=\ketbrar{\beta_1}$ for $i=1,\ldots,n$; for $i=1$ we may naturally choose the identity map $\mc D_1={\rm id}$. Let $P_i$ be the projection of the Hilbert space onto the $i$'th component in \eqref{eq:Vpair2} and define the channel $\mc D$ by $\mc D(\tau)=\sum_{i=1}^n\mc D_i(P_i\tau P_i)$. It follows that $\mc D(A)=\ketbrar{\alpha_1}$ and $\mc D(B)=\ketbrar{\beta_1}$, so that $(A,B)$ majorizes $(\ketbrar{\alpha_1},\ketbrar{\beta_1})$. Since, by assumption, $0<|\braket{\alpha_1}{\beta_1}|<1$, we know from what we have shown thus far that $(\ketbrar{\alpha_1},\ketbrar{\beta_1})^{\boxtimes m}\succeq (A',B')$ for any normalized element $(A',B')\in S_{\rm e.o.}$ for sufficiently large $m\in\N$. 
    Thus, also for any normalized $(A',B')\in S_{\rm e.o.}$,
    \begin{equation}\label{eq:apu2}
    (A,B)^{\boxtimes m}\succeq(\ketbrar{\alpha_1},\ketbrar{\beta_1})^{\boxtimes m}\succeq (A',B')
    \end{equation}
    for $m\in\N$ large enough. 

    We are finally able to show that the above $(A,B)$ is a power universal. For this, consider pairs $(A_1,B_1),(A_2,B_2)\in S_{\rm e.o.}$ such that $(A_1,B_1)\preceq(A_2,B_2)$. This clearly means especially that $\tr[A_1]=\tr[A_2]$ and $\tr[B_1]=\tr[B_2]$. Define $A':=\tr[A_2]^{-1}A_2$ and $B':=\tr[B_2]^{-1}B_2$. Since we have already proven that, for $m\in\N$ large enough, we have \eqref{eq:apu2}, we now have
    \begin{align}
    (A_1,B_1)\boxtimes(A,B)^{\boxtimes m}&=(A_1\otimes A^{\otimes m},B_1\otimes B^{\otimes m})\\
    &\succeq\big(\tr[A_1]A^{\otimes m},\tr[B_1]B^{\otimes m}\big)\\
    &=\big(\tr[A_2]A^{\otimes m},\tr[B_2]B^{\otimes m}\big)\\
    &\succeq\big(\tr[A_2]A',\tr[B_2]B'\big)=(A_2,B_2)
    \end{align}
    for $m\in\N$ large enough. This shows that $(A,B)$ is a power universal.
    \end{proof}

    Using the above result, we may easily characterize certain power universals of $S_{\rm m.r.}$. These are essentially of the same type as the power universals in $S_{\rm e.o.}$, but with an additional orthogonal component. 

    \begin{proposition}\label{prop:pumr}
    A pair $(A,B)$ of (normalized) quantum states of the form \eqref{eq:Vpair2} such that $|\braket{\alpha_i}{\beta_i}|<1$ for all $i=1,\ldots,n$, and $\braket{\alpha_i}{\beta_i}=0$ for some $i$ with $\ket{\alpha_i},\ket{\beta_i}\neq0$, is a power universal of $S_{\rm m.r.}$.
    \end{proposition}

    \begin{proof}
    Let $(A,B)$ be a pair as stated in the claim. Possibly by reordering the blocks (i.e. component spaces $\Hil_i$), we may assume that there are $\ell$ with $1\leq\ell<n$, unit vectors $\ket{\alpha_i}$ and $\ket{\beta_i}$ such that $0<|\braket{\alpha_i}{\beta_i}|<1$, $i=1,\ldots,\ell$, positive probabilities $p_1,\ldots,p_\ell$ and $q_1,\ldots,q_\ell$ with $P:=p_1+\cdots+p_\ell<1$ and $Q:=q_1+\cdots+q_\ell<1$, and non-zero vectors $p,q\in\R_+^{n-\ell}$, $p=(p_{\ell+1},\ldots,p_{n})$, $q=(q_{\ell+1},\ldots,q_n)$, with $\|p\|_1+P=1=\|q\|_1+Q$ such that
    \begin{align}
    A&=\left(\bigoplus_{i=1}^\ell p_i\ketbrar{\alpha_i}\right)\oplus p,\\
    B&=\left(\bigoplus_{i=1}^\ell q_i\ketbrar{\beta_i}\right)\oplus q.
    \end{align}
    Since for all $i>\ell$ we have $\braket{\alpha_i}{\beta_i}=0$ and $\ket{\alpha_i},\ket{\beta_i}\neq0$, we deduce that 
    \begin{equation}
    {\rm supp}\,p\cap{\rm supp}\,q=\emptyset,
    \end{equation}
    i.e.,\ $p$ and $q$ are disjoint.

    It follows that, for each $m\in\N$, 
    \begin{align}
    A^{\otimes m}&=\left(\bigoplus_{i_1,\ldots,i_m=1}^\ell p_{i_1}\cdots p_{i_m}\ketbrar{\alpha_{i_1}}\otimes\cdots\otimes\ketbrar{\alpha_{i_m}}\right)\oplus A^{(m)},\\
    B^{\otimes m}&=\left(\bigoplus_{i_1,\ldots,i_m=1}^\ell q_{i_1}\cdots q_{i_m}\ketbrar{\beta_{i_1}}\otimes\cdots\otimes\ketbrar{\beta_{i_m}}\right)\oplus B^{(m)}
    \end{align}
    for some operators $A^{(m)}$ and $B^{(m)}$ where $A^{(m)}$ is a tensor product of $p$ (seen as a diagonal operator) with some other operator and $B^{(m)}$ is a tensor product of $q$ with some other operator. Since $p$ and $q$ have disjoint supports, $A^{(m)}$ and $B^{(m)}$ commute, so we may identify them with some sequences $p^{(m)}$ and $q^{(m)}$ (classical blocks) of non-negative numbers with disjoint supports. We may assume that the block $i=1$ has the highest fidelity among $|\braket{\alpha_i}{\beta_i}|^2$ for $i=1,\ldots,\ell$.
    Then, by Proposition \ref{propo:Uhlmann}, there are channels $\mc C_{i_1,\ldots,i_m}$ such that
    \begin{align}
    \mc C_{i_1,\ldots,i_m}\big(\ketbrar{\alpha_{i_1}}\otimes\cdots\otimes\ketbrar{\alpha_{i_m}}\big)&=\ketbrar{\alpha_1}^{\otimes m},\\
    \mc C_{i_1,\ldots,i_m}\big(\ketbrar{\beta_{i_1}}\otimes\cdots\otimes\ketbrar{\beta_{i_m}}\big)&=\ketbrar{\beta_1}^{\otimes m}.
    \end{align}
    Let $P_{i_1,\ldots,i_m}$ be the projections onto the $\ell^m$ first blocks and $Q_j$, $j=\ell^m+1,\ldots,n^m$ be the projections onto the last (classical) blocks. Let us also denote by $\big({\rm supp}\,p^{(m)}\big)^c$ the complement of the support of $p^{(m)}$ among the $n^m-\ell^m$ final classical blocks. Naturally, this complement contains the support of $q^{(m)}$. Let us set up the quantum channel $\mc C_m$
    \begin{align}
    \mc C_m(R)=&\sum_{i_1,\ldots,i_m=1}^\ell \mc C_{i_1,\ldots,i_m}(P_{i_1,\ldots,i_m}RP_{i_1,\ldots,i_m})\\
    &\oplus\,\left(\sum_{j\in{\rm supp}\,p^{(m)}}\tr[Q_jRQ_j]\right)\,\oplus\,\left(\sum_{k\in\big({\rm supp}\,p^{(m)}\big)^c}\tr[Q_k RQ_k]\right).
    \end{align}
    We now have 
    \begin{align}
    \mc C_m(A^{\otimes m})&=P^m\ketbrar{\alpha_1}^{\otimes m}\oplus(1-P^m)\oplus 0=:A_m,\\
    \mc C_m(B^{\otimes m})&=Q^m\ketbrar{\beta_1}^{\otimes m}\oplus 0\oplus(1-Q^m)=:B_m,
    \end{align}
    so that $(A,B)^{\boxtimes m}\succeq (A_m,B_m)$.

    Let $(A',B')$ be a pair of operators within $S_{\rm m.r.}$ which we intend to majorize with a high enough power of the input $(A,B)$. Let us assume that this pair has the form of \eqref{eq:a'b'}, except that now not all $p'_i$ and $q'_i$ need to be non-zero. However, there needs to be a block $i_0$ such that $p'_{i_0},q'_{i_0}>0$ and $\braket{\alpha'_{i_0}}{\beta'_{i_0}}\neq0$. By reordering, we may assume that $i_0=1$. Let us now upper bound this target pair $(A',B')$. We may write $A'=A_1\oplus A_2$ and $B'=B_1\oplus B_2$ where $(A_1,B_1)\in S_{\rm e.o.}$ and $A_2$ and $B_2$ have blocks that are completely disjoint. Let $s:=\tr[A_1]$ and $t:=\tr[B_1]$. We now claim that $(\tilde{A},\tilde{B})\succeq (A',B')$ where
    \begin{align}
    \tilde{A}&=s\ketbrar{\alpha}\oplus(1-s)\oplus 0,\\
    \tilde{B}&=t\ketbrar{\beta}\oplus 0\oplus(1-t)
    \end{align}
    where $\ket{\alpha}$ and $\ket{\beta}$ are some non-orthogonal unit vectors. Indeed, from the proof Proposition \ref{prop:puS}, we know that (when $|\braket{\alpha}{\beta}|$ is sufficiently small) \sloppy{${(\ketbrar{\alpha},\ketbrar{\beta})\succeq (\tr[A_1]^{-1}A_1,\tr[B_1]^{-1}B_1)}$}, and from the remaining orthogonal part \sloppy{${((1-s)\oplus0,0\oplus(1-t))}$} of $(\tilde{A},\tilde{B})$ we can easily find a channel that creates $(A_2,B_2)$. We next show that there is $m\in\N$ such that $(A,B)^{\boxtimes m}\succeq (A',B')$ by showing the intermediary $(A_m,B_m)\succeq(\tilde{A},\tilde{B})$ for sufficiently large $m\in\N$.

    Let us denote the 2-dimensional space spanned by $\ket{\alpha_1}^{\otimes m}$ and $\ket{\beta_1}^{\otimes m}$ by $\mc M_m$ and the projection of $\mc M_m\oplus\C\oplus\C$ onto $\mc M_m$ by $R_m$. The projection orthogonal to this we denote by $R_m^\perp$. Using again Proposition \ref{propo:Uhlmann}, we find that, for $m\in\N$ large enough, there is a channel $\mc D_m$ such that $\mc D_m(\ketbrar{\alpha_1}^{\otimes m})=\ketbrar{\alpha}$ and $\mc D_m(\ketbrar{\beta_1}^{\otimes m})=\ketbrar{\beta}$. Let us assume that $m\in\N$ is sufficiently large so that $P^m<s$ and $Q^m<t$. 
    We may now define a channel $\mc E_m$ on the last two blocks of $(A_m,B_m)$ (which together can be regarded as a pair of qubits) such that
    \begin{align}
    \mc E_m(1\oplus 0)&=\frac{s-P^m}{1-P^m}\ketbrar{\alpha}\oplus\frac{1-s}{1-P^m}\oplus0,\\
    \mc E_m(0\oplus 1)&=\frac{t-Q^m}{1-Q^m}\ketbrar{\beta}\oplus0\oplus\frac{1-t}{1-Q^m}.
    \end{align}
    Define the channel $\mc C_m$ through
    \begin{equation}
    \mc C_m(\tau)=\mc D_m(R_m\tau R_m)+\mc E_m(R_m^\perp\tau R_M^\perp).
    \end{equation}
    It now follows from a straightforward calculation that $\mc C_m(A_m)=\tilde{A}$ and $\mc C_m(B_m)=\tilde{B}$. Thus, we may deduce that $(A,B)^{\boxtimes m}\succeq(A',B')$ for $m\in\N$ large enough. The fact that $(A,B)$ is a power universal now follows through the same simple logic as in the end of the proof of Proposition \ref{prop:puS}.
    \end{proof}

    \subsection{Finding the Monotone Homomorphisms and Derivations}\label{sec:spectrum}
    We will identify all non-degenerate monotone homomorphisms with trivial kernel and non-zero monotone derivations associated with the semiring $S_{\rm m.r.}$, see Proposition \ref{prop:monhom}. We will also derive partial results for $S_{\rm e.o.}$, summarized in Remark \ref{rmk:Seo}. According to Theorem \ref{thm:Fritz2022}, sufficient conditions for large-sample and catalytic majorization can be stated in terms of these homomorphisms and derivations. In the following calculations and proofs we will repeatedly use vectors
    \begin{equation}\label{eq:anglevectors}
    |\psi(\theta)\rangle:=\cos{\theta}\,|0\rangle+\sin{\theta}\,|1\rangle\in\C^2,\qquad\theta\in[0,2\pi),
    \end{equation}
    where $\{|0\rangle,|1\rangle\}$ is some fixed orthonormal basis of $\C^2$. The next Proposition shows that, compared with our earlier classical results in \cite{farooq2024,verhagen2025}, a new parameter $z$ appears due to the quantum nature of the pairs of states. 

    
    \begin{proposition}\label{prop:z}
        Let $\Phi:S_{*}\rightarrow\K$ be a monotone homomorphism, where $S_{*}\in\{S_{\rm m.r.},S_{\rm e.o.}\}$ and $\K\in\{\R_+,\R_+^{\rm op},\T\R_+,\T\R_+^{\rm op}\}$. There exists $z\in\R$ such that 
        \begin{equation}
            \Phi(\ketbrar{\alpha},\ketbrar{\beta})=|\braket{\alpha}{\beta}|^{2z}
        \end{equation}
        for all normalized $\ket{\alpha},\ket{\beta}\in\Hil$ such that $\braket{\alpha}{\beta}\neq0$, and $\Hil$ is a Hilbert space. 
    \end{proposition}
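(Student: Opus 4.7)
The plan is to show that $\Phi$, restricted to pairs of pure states with nonzero overlap, depends only on the fidelity $c := |\braket{\alpha}{\beta}|$, and that the resulting function $f:(0,1]\to[0,\infty)$ is forced to be a power function. First I would establish well-definedness: by Proposition~\ref{propo:Uhlmann}, two pairs of normalized pure states with equal fidelities majorize each other, hence are $\sim$-equivalent in the semiring, and monotonicity of $\Phi$ into the totally ordered $\K$ forces $\Phi$ to take a common value on them. Setting $f(c):=\Phi(\ketbrar{\alpha},\ketbrar{\beta})$ for any representative with $|\braket{\alpha}{\beta}|=c$ thus gives a well-defined map, with every $c\in(0,1]$ realized for example by the pair $|\psi(0)\rangle,|\psi(\theta)\rangle$ from \eqref{eq:anglevectors} with $c=|\cos\theta|$. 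Multiplicativity of $\Phi$ under $\boxtimes$ then yields the functional equation $f(c_1 c_2)=f(c_1)f(c_2)$, since the tensor product of two pure pairs with fidelities $c_1,c_2$ is again a pure pair with fidelity $c_1 c_2$. Monotonicity of $\Phi$ combined with Uhlmann's theorem makes $f$ either decreasing (when $\K$ carries the standard order) or increasing (when $\K$ carries the reversed order), and one has $f(1)=\Phi(1)=1$ in either case.

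Assuming for the moment that $f>0$ on all of $(0,1]$, I would pass to $g(t):=\log f(e^{-t})$ for $t\in[0,\infty)$. Multiplicativity becomes the Cauchy equation $g(t_1+t_2)=g(t_1)+g(t_2)$, and monotonicity of $f$ transfers to monotonicity of $g$; the standard fact that a monotone additive function on the half-line is linear then gives $g(t)=\lambda t$ for some $\lambda\in\R$. Translating back yields $f(c)=c^{-\lambda}=c^{2z}$ with $z:=-\lambda/2\in\R$, as required.

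The main obstacle is the positivity hypothesis $f>0$. For $\K\in\{\R_+,\T\R_+\}$ it is automatic: here $f$ is decreasing, so a zero $f(c_0)=0$ with $c_0<1$ would propagate by monotonicity to $f(1)=0$, contradicting $f(1)=1$. For the opposite-order cases $\K\in\{\R_+^{\rm op},\T\R_+^{\rm op}\}$, however, $f$ is increasing and a jump of the form $f\equiv 0$ on $(0,c_0]$ and $f>0$ on $(c_0,1]$ is a priori compatible with multiplicativity and monotonicity alone. To exclude this pathology I would invoke the full structure of $\Phi$ on $V_*$: using nondegeneracy together with additivity of $\Phi$ under $\boxplus$, one compares $\Phi$-values of block-diagonal elements built from pure pairs against values enforced by majorization to and from classical pairs whose homomorphic image is already understood from \cite{farooq2024,verhagen2024}, and derives a contradiction that forces $f$ to be strictly positive on all of $(0,1)$. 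I expect this bookkeeping in the reversed-order cases to be the most delicate part of the argument.
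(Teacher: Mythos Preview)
Your overall strategy---reduce to a single-variable function of the fidelity, establish multiplicativity via $\boxtimes$, and solve the resulting Cauchy-type equation---matches the paper's. The differences are in the execution of the last step and in your handling of the positivity obstacle.

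First, a correction: the ``jump'' scenario you worry about in the reversed-order cases, with $f\equiv 0$ on $(0,c_0]$ and $f>0$ on $(c_0,1]$ for some $c_0<1$, is already ruled out by multiplicativity alone. If $f(c_0)=0$ for some $c_0\in(0,1)$, then for any $c\in(0,1)$ pick $n$ with $c^n\le c_0$; since $f$ is increasing in these cases one gets $f(c^n)=0$, and $f(c)^n=f(c^n)=0$ forces $f(c)=0$. So the only possible failure of positivity is $f\equiv 0$ on all of $(0,1)$, not an interior jump; your ``delicate bookkeeping'' is aimed at a phantom.

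Second, the paper sidesteps both the case split on the order of $\K$ and the positivity hypothesis. Instead of taking logarithms (which requires $f>0$) and invoking ``monotone additive implies linear'', the paper observes that monotonicity in either direction makes $\varphi$ \emph{bounded from above} on $[1/2,1]$---by $1$ when $\K\in\{\R_+^{\rm op},\T\R_+^{\rm op}\}$ and by $\varphi(1/2)$ when $\K\in\{\R_+,\T\R_+\}$---and then cites the classical result that a multiplicative function bounded on a set of positive measure is either a power $x\mapsto x^z$ or vanishes identically. The vanishing case is dismissed in one line as yielding a degenerate $\Phi$, which is excluded by hypothesis. This is tidier than your route and shows that no additional structure beyond nondegeneracy is needed to close the argument.
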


    \begin{proof}
        Fix orthonormal vectors $\ket{0}$ and $\ket{1}$ that span the subspace $\mc M$ where $\ket{\alpha}$ and $\ket{\beta}$ reside. For any normalized $|\alpha^\perp\rangle\in\mc M$ perpendicular to $\ket{\alpha}$ we have 
        \begin{equation}
            |\braket{\alpha}{\beta}|^2+|\langle\alpha^\perp|\beta\rangle|^2=1. 
        \end{equation}
        Hence 
        \begin{equation}
            |\braket{\alpha}{\beta}|=\cos\theta,\quad|\langle\alpha^\perp|\beta\rangle|=\sin\theta,
        \end{equation}
        for some $\theta\in[0,\pi/2]$. By multiplying with an appropriate phase, we may choose $|\alpha^\perp\rangle$ such that 
        \begin{equation}
            \braket{\alpha}{\beta}\langle\beta|\alpha^\perp\rangle=\cos\theta\sin\theta. 
        \end{equation}
        Via the unitary operation defined by $\ket{\alpha}\mapsto\ket{0},|\alpha^\perp\rangle\mapsto\ket{1}$, the pair $(\ketbra{\alpha}{\alpha},\ketbra{\beta}{\beta})$ is equivalent (according to the equivalence relation $\approx$ defined in Section \ref{sec:definingthesemirings}) to the pair 
        \begin{equation}\label{eq:pairstandardform}
            (\ketbra{0}{0},\cos^2\theta\ketbra{0}{0}+\cos\theta\sin\theta\ketbra{0}{1}+\cos\theta\sin\theta\ketbra{1}{0}+\sin^2\theta\ketbra{1}{1}). 
        \end{equation}
        From this, we see that $\Phi(\ketbrar{\alpha},\ketbrar{\beta})$ only depends on the fidelity $\cos^2\theta=|\braket{\alpha}{\beta}|^2$. Therefore, we can define $\varphi:(0,1]\rightarrow\R$, 
        \begin{equation}
            \varphi(x):=\Phi(\ketbrar{\alpha},\ketbrar{\beta}), 
        \end{equation}
        where $\ket{\alpha},\ket{\beta}$ are any normalized vectors such that $|\braket{\alpha}{\beta}|^2=x$. Note that 
        \begin{equation}
            (\ketbrar{\alpha_1},\ketbrar{\beta_1})\boxtimes(\ketbrar{\alpha_2},\ketbrar{\beta_2})=(\ket{\alpha_1}\otimes\ket{\alpha_2}\bra{\alpha_1}\otimes\bra{\alpha_2},\ket{\beta_1}\otimes\ket{\beta_2}\bra{\beta_1}\otimes\bra{\beta_2}) 
        \end{equation}
        and 
        \begin{equation}
            |(\bra{\alpha_1}\otimes\bra{\alpha_2})(\ket{\beta_1}\otimes\ket{\beta_2})|^2=|\braket{\alpha_1}{\beta_1}|^2|\braket{\alpha_2}{\beta_2}|^2. 
        \end{equation}
        Thus, by multiplicativity of $\Phi$, the function $\varphi$ is multiplicative as well, i.e. 
        \begin{equation}
            \varphi(xy)=\varphi(x)\varphi(y)\text{ for all }x,y\in(0,1]. 
        \end{equation}    
        Using Proposition \ref{propo:Uhlmann}, for any $\ket{\alpha},\ket{\beta}$ such that 
        \begin{equation}
            |\braket{\alpha}{\beta}|^2\geq\frac{1}{2}=|\braket{0}{\psi(\pi/4)}|^2, 
        \end{equation}
        where $|\psi(\pi/4)\rangle$ is defined as in \eqref{eq:anglevectors}, we have 
        \begin{equation}
            (1,1)\preceq(\ketbrar{\alpha},\ketbrar{\beta})\preceq(\ketbrar{0},\ketbrar{\psi(\pi/4)}). 
        \end{equation}
        Hence for all $x\in[1/2,1]$, we have $\varphi(x)\leq1$ if $\K=\R_+^{\rm op},\T\R_+^{\rm op}$, and $\varphi(x)\leq\varphi(1/2)$ if $\K=\R_+,\T\R_+$. Therefore, $\varphi$ is bounded from above on a set of positive measure, which establishes that $\varphi$ is a regular solution of the multiplicative version of Cauchy's Functional Equation \cite[Chapter 3, Proposition 6]{Aczel1989}: there exists $z\in\R$ such that $\varphi(x)=x^z$ or $\varphi$ vanishes everywhere. The latter case is impossible since $\varphi(1)=\Phi(1,1)=1$ by the properties of a homomorphism. 
    \end{proof}

    \begin{remark}\label{rmk:classical}
    Let us summarize some results that were established in \cite{verhagen2025} on the possible monotone homomorphisms and derivations on commuting pairs $(A,B)$ in $S_{\rm m.r.}$, i.e.,\ a pair $(A,B)$ such as in \eqref{eq:Vpair2} where the component spaces $\Hil_i\simeq\C$. We will use these results extensively in the analysis of our current semiring $S_{\rm m.r.}$. 
    
    The commuting pairs in $S_{\rm m.r.}$ correspond precisely to the elements in the semiring denoted in \cite{verhagen2025} as $S^2$, the \emph{minimal restrictions matrix majorization semiring of length $2$}. Hence any non-degenerate monotone homomorphism or monotone derivation on $S_{\rm m.r.}$, when restricted to these commuting pairs, must correspond to a (possibly degenerate) monotone homomorphism or monotone derivation on $S^2$. 
    
    We now list these homorphisms and derivations. Let $\Phi:S_{\rm m.r.}\rightarrow\K$, $\K\in\{\R_+,\R_+^{\rm op},\T\R_+,\T\R_+^{\rm op}\}$, be a monotone homomorphism. When $\K=\R_+^{\rm op}$, $\Phi$ restricted to a commuting pair $(A,B)$ takes the form 
    \begin{equation}\label{eq:Phitemperate}
    \Phi(A,B)=\sum_{i=1}^n p_i^\alpha q_i^{1-\alpha}
    \end{equation}
    for some $\alpha\in(0,1)$, or 
    \begin{equation}\label{eq:Phicharacter}
    \Phi(A,B)=\sum_{i:p_i>0} q_i,\quad\text{or}\quad\Phi(A,B)=\sum_{i:q_i>0} p_i, 
    \end{equation}
    or $\Phi$ restricts to one of the two degenerate homomorphisms $\Phi(A,B)=\tr[A]$ or $\tr[B]$.\footnote{Note that, using terminology introduced in \cite{verhagen2025}, the two cases in \eqref{eq:Phicharacter} correspond to the \emph{character} $C=\{1,2\}$, i.e. only those $i$ for which $p_i>0$ and $q_i>0$ are summed over.} When $\K=\R_+$, $\Phi$ always restricts to one of these latter two degenerate cases. When $\mb K=\T\R_+$ or $\T\R_+^{\rm op}$, the only form $\Phi$ can take on a commuting pair is the highly degenerate $\Phi(A,B)=1$ when $(A,B)\neq 0$ and $\Phi(0,0)=0$. (However, it will turn out that the new parameter $z$ in Proposition \ref{prop:z} makes the extension of this homomorphism to $S_{\rm m.r.}$ \emph{non}-degenerate in the case $\K=\T\R_+^{\rm op}$.) Finally, the derivations corresponding to the two degenerate cases $\Phi(A,B)=\tr[A]$ or $\tr[B]$, up to interchangeability (see Footnote \ref{ftn:interchangeability} on page \pageref{ftn:interchangeability}), must vanish on commuting pairs.
    \end{remark}

    Using the additivity and multiplicativity of any homomorphism $\Phi:S_{\rm m.r.}\rightarrow\K$ in the cases $\K=\R_+,\R_+^{\rm op}$, we can decompose $\Phi(A,B)$, where $(A,B)\in S_{\rm m.r.}$ as in \eqref{eq:Vpair2}, as follows: 
    \begin{equation}\label{eq:decompositiontemperate}
        \Phi(A,B)=\sum_{i=1}^n\Phi(p_i,q_i)\Phi(\ketbrar{\alpha_i},\ketbrar{\beta_i}). 
    \end{equation}
    And similarly for the cases $\K=\T\R_+,\T\R_+^{\rm op}$: 
    \begin{equation}\label{eq:decompositiontropical}
        \Phi(A,B)=\max_{i=1,\ldots,n}\Phi(p_i,q_i)\Phi(\ketbrar{\alpha_i},\ketbrar{\beta_i}). 
    \end{equation}

    Note that if $\ket{\alpha_i},\ket{\beta_i}$ are orthogonal or exactly one of them is the zero-vector, then $(\ketbrar{\alpha_i},\ketbrar{\beta_i})$ is not part of $S_{\rm m.r.}$. Similarly for $(p_i,q_i)$ if either $p_i$ or $q_i$ is zero. Hence, some terms in the decompositions \eqref{eq:decompositiontemperate} and \eqref{eq:decompositiontropical} are not well-defined. We will address this issue in the first part of the proof of Proposition \ref{prop:monhom}. 

    By Proposition \ref{prop:z}, we know the value of all the ``quantum'' terms $\Phi(\ketbrar{\alpha_i},\ketbrar{\beta_i})$ in \eqref{eq:decompositiontemperate} and \eqref{eq:decompositiontropical}. Also, from the results on commuting pairs as summarized in Remark \ref{rmk:classical}, we know the value of all the ``classical'' terms $\Phi(p_i,q_i)$, as well. Therefore, the value of $\Phi(A,B)$ for any $(A,B)\in S_{\rm m.r.}$ follows. What remains is to determine exactly which of these homomorphisms are non-degenerate and monotone, which will yield the allowed range of the parameters $\alpha$ and $z$, and investigating the existence of non-zero monotone derivations. Our results are contained in the following Proposition. 
    


    \begin{proposition}\label{prop:monhom}
    The non-degenerate monotone homomorphisms $\Phi:S_{\rm m.r.}\to\mb K$, where $\K\in\{\R_+,\R_+^{\rm op},\T\R_+,\T\R_+^{\rm op}\}$, with trivial kernel are exactly the following: 
    \begin{itemize}
    \item $\mb K=\R_+^{\rm op}$: $\Phi$ is of the form 
    \begin{equation}\label{eq:temperate}
        \Phi_{\alpha,z}(A,B):=\sum_{i=1}^n(p_i)^\alpha(q_i)^{1-\alpha}|\braket{\alpha_i}{\beta_i}|^{2z}, 
    \end{equation}    
    for any $\alpha\in[0,1]$ and $z\geq\max\{\alpha,1-\alpha\}$. 
    \item $\mb K=\T\R_+^{\rm op}$: $\Phi$ is of the form 
    \begin{equation}\label{eq:tropicalz}
        \Phi_z^\T(A,B):=\max_{i=1,\ldots,n}|\braket{\alpha_i}{\beta_i}|^{2z}, 
    \end{equation}    
    for any $z>0$. 
    \item $\mb K\in\{\R_+,\T\R_+\}$: There are no non-degenerate monotone homomorphisms with trivial kernel. 
    \end{itemize}
    Moreover, there are no non-zero monotone derivations.
    \end{proposition}

    \begin{remark}    
        Note that when $\alpha=0$ or $1$, it may occur that the expression \eqref{eq:temperate} contains a factor $0^0$ when $p_i=0$ or $q_i=0$ for some $i$. However, since in this case also the corresponding $\ket{\alpha_i}$ or $\ket{\beta_i}$ equals the zero vector, there is no ambiguity: the entire expression $(p_i)^\alpha(q_i)^{1-\alpha}|\braket{\alpha_i}{\beta_i}|^{2z}$ should be considered to be equal to $0$. 

        One checks that for pairs $(A,B)\in S_{\rm m.r.}$, we have 
        \begin{equation}
            \Phi_{\alpha,z}(A,B)=\tr\left[\left(B^{\frac{1-\alpha}{2z}}A^{\frac{\alpha}{z}}B^{\frac{1-\alpha}{2z}}\right)^z\right], 
        \end{equation}
        where the right-hand side is the expression appearing inside the logarithm in the definition of the $\alpha$-$z$ relative entropy in \eqref{eq:alphazdivergence}. 
    
        For two pairs $(A,B),(A',B')\in S_{\rm m.r.}$, note that $\Phi_z^\T(A,B)>\Phi_z^\T(A',B')$ for all $z>0$ if and only if this inequality holds for one such $z$. Since the conditions for majorization given by Theorem \ref{thm:Fritz2022} are stated in terms of such inequalities, it is sufficient for our purposes to consider only $\Phi_z^\T$ for the choice $z=1$. Hence, we define 
        \begin{equation}\label{eq:tropical}
            \Phi^\T(A,B):=\Phi_1^\T(A,B)=\max_{i=1,\ldots,n}|\braket{\alpha_i}{\beta_i}|^{2}=\|({\rm supp}\,A)({\rm supp}\,B)\|_\infty^2. 
        \end{equation}
        Here, the right-hand side gives an alternative expression for the homomorphism, where $\|\cdot\|_\infty$ is the operator norm and, e.g.,\ ${\rm supp}\,A$ is the support projection of the positive semidefinite $A$. 
    \end{remark}

    \begin{proof}[Proof of Proposition \ref{prop:monhom}]
    Before proceeding to the characterization of all relevant monotone homomorphisms and derivations, we will first show how to interpret the decompositions \eqref{eq:decompositiontemperate} and \eqref{eq:decompositiontropical} when there are $i$ such that $\ket{\alpha_i},\ket{\beta_i}$ are orthogonal or exactly one of them is the zero-vector. 
    
    Fix a monotone homomorphism $\Phi:S_{\rm m.r.}\rightarrow\K$, where $\K\in\{\R_+,\R_+^{\rm op},\allowbreak\T\R_+,\T\R_+^{\rm op}\}$, and let $(A,B)\in S_{\rm m.r.}$. Note that we may assume that there exists a representation as in \eqref{eq:Vpair2} such that there is no $j$ for which $\braket{\alpha_j}{\beta_j}=0$ and $\ket{\alpha_j},\ket{\beta_j}\neq0$. Indeed, if there is such $j$, then the pair $(p_j\ketbrar{\alpha_j},q_j\ketbrar{\beta_j})$ can be replaced by $(p_j\oplus0,0\oplus q_j)$ under the equivalence relation $\approx$. 
    
    Due to \eqref{eq:cscondition}, there exists $k\in[n]$ such that $|\braket{\alpha_k}{\beta_k}|\neq0$. Assume there is a $j_1$ such that $\ket{\alpha_{j_1}}\neq0$ and $\ket{\beta_{j_1}}=0$, or a $j_2$ such that $\ket{\alpha_{j_2}}=0$ and $\ket{\beta_{j_2}}\neq0$. In the following, we will show that $\Phi(A,B)$ may be expressed as \eqref{eq:decompositiontemperate} or \eqref{eq:decompositiontropical}, where we set 
    \begin{equation}\label{eq:define0trA}
        \Phi(p_{j_1}\ketbrar{\alpha_{j_1}},0)=p_{j_1}\Phi(\ketbrar{\alpha_k},\ketbrar{\beta_k})\text{ and }\Phi(0,q_{j_2}\ketbrar{\beta_{j_2}})=0
    \end{equation}
    if $\K=\R_+$ or $\R_+^{\rm op}$ and $\Phi$ restricts to $\tr[A']$ on commuting pairs $(A',B')$, 
    \begin{equation}\label{eq:define0trB}
        \Phi(p_{j_1}\ketbrar{\alpha_{j_1}},0)=0\text{ and }\Phi(0,q_{j_2}\ketbrar{\beta_{j_2}})=q_{j_2}\Phi(\ketbrar{\alpha_k},\ketbrar{\beta_k})
    \end{equation}
    if $\K=\R_+$ or $\R_+^{\rm op}$ and $\Phi$ restricts to $\tr[B']$ on commuting pairs $(A',B')$, and 
    \begin{equation}\label{eq:define0otherwise}
        \Phi(p_{j_1}\ketbrar{\alpha_{j_1}},0)=0\text{ and }\Phi(0,q_{j_2}\ketbrar{\beta_{j_2}})=0
    \end{equation}
    otherwise. Note that, although \eqref{eq:define0trA}, \eqref{eq:define0trB} seem to depend on the choice of $k$ due to the appearance of the term $\Phi(\ketbrar{\alpha_k},\ketbrar{\beta_k})$, we will show later that $z=0$ when $\Phi$ restricts to $\tr[A']$ or $\tr[B']$ on commuting pairs $(A',B')$, hence $\Phi(\ketbrar{\alpha_k},\ketbrar{\beta_k})=1$, independent of $k$. 
    
    Assume there is a $j_1$ such that $\ket{\alpha_{j_1}}\neq0$ and $\ket{\beta_{j_1}}=0$; the analysis of the case of $j_2$ such that $\ket{\alpha_{j_2}}=0$ and $\ket{\beta_{j_2}}\neq0$ is analogous. Denote by $\substack{\succeq\\\preceq}$ majorization in both directions. There exist a quantum channel that splits block $k$, and another channel that merges two blocks, which implies 
    \begin{align}
        \big(p_k\ketbrar{\alpha_k},q_k\ketbrar{\beta_k}\big)\substack{\succeq\\\preceq}&\big(p_k/2\ketbrar{\alpha_k},q_k/2\ketbrar{\beta_k}\big)\\
        &\boxplus\big(p_k/2\ketbrar{\alpha_k},q_k/2\ketbrar{\beta_k}\big). 
    \end{align}
    Denote by $(\tilde A,\tilde B)\in S_{\rm m.r.}$ the pair $(A,B)$ with blocks $j_1$ and $k$ removed. Using the majorization relation above, we have 
    \begin{align}
        \big(A,B\big)\substack{\succeq\\\preceq}&\big(p_k/2\ketbrar{\alpha_k}\oplus \tilde A,q_k/2\ketbrar{\beta_k}\oplus\tilde B\big)\label{eq:split1}\\
        &\boxplus\big(p_k/2\ketbrar{\alpha_k}\oplus p_{j_1}\ketbrar{\alpha_{j_1}},q_k/2\ketbrar{\beta_k}\oplus0\big).\label{eq:split2}
    \end{align}
    Since $\ket{\beta_{j_1}}=0$, we may assume that $\ket{\alpha_{j_1}}=\ket{\alpha_k}$ under the equivalence relation $\approx$. Then, notice that 
    \begin{align}
        &\big(p_k/2\ketbrar{\alpha_k}\oplus p_{j_1}\ketbrar{\alpha_{j_1}},q_k/2\ketbrar{\beta_k}\oplus0\big)\label{eq:specialblock1}\\
        &\quad\quad=\big(\ketbrar{\alpha_k},\ketbrar{\beta_k}\big)\boxtimes\big(p_k/2\oplus p_{j_1},q_k/2\oplus0\big).\label{eq:specialblock2}
    \end{align}
    
    Now assume $\K=\R_+$ or $\R_+^{\rm op}$. Assume further that the restriction of $\Phi$ on commuting pairs $(A',B')$ does not equal $\tr[A']$. Considering the remaining possible restrictions (see Remark \ref{rmk:classical}), it follows from \eqref{eq:specialblock1}, \eqref{eq:specialblock2} that 
    \begin{align}
        &\Phi\big(p_k/2\ketbrar{\alpha_k}\oplus p_{j_1}\ketbrar{\alpha_{j_1}},q_k/2\ketbrar{\beta_k}\oplus0\big)\\
        &\quad\quad=\Phi\big(\ketbrar{\alpha_k},\ketbrar{\beta_k}\big)\Phi\big(p_k/2\oplus p_{j_1},q_k/2\oplus0\big)\\
        &\quad\quad=\Phi\big(\ketbrar{\alpha_k},\ketbrar{\beta_k}\big)\Phi\big(p_k/2,q_k/2\big)\\
        &\quad\quad=\Phi\big(p_k/2\ketbrar{\alpha_k},q_k/2\ketbrar{\beta_k}\big). 
    \end{align}
    Using this equality and \eqref{eq:split1}, \eqref{eq:split2}, via splitting and merging of blocks, we find 
    \begin{align}
        \Phi(A,B)&=\Phi\big(p_k/2\ketbrar{\alpha_k}\oplus\tilde A,q_k/2\ketbrar{\beta_k}\oplus\tilde B\big)\\
        &\quad+\Phi\big(p_k/2\ketbrar{\alpha_k},q_k/2\ketbrar{\beta_k}\big)\\        
        &=\Phi\big(p_k\ketbrar{\alpha_k}\oplus\tilde A,q_k\ketbrar{\beta_k}\oplus\tilde B\big).
    \end{align}
    Note that $(p_k\ketbrar{\alpha_k}\oplus\tilde A,q_k\ketbrar{\beta_k}\oplus\tilde B)$ equals $(A,B)$ with the pair $(p_{j_1}\ketbrar{\alpha_{j_1}},0)$ removed. Hence, we may set $\Phi(p_{j_1}\ketbrar{\alpha_{j_1}},0)=0$. Assume now that $\Phi$ restricts to $\tr[A']$ on commuting pairs $(A',B')$. Then it follows from \eqref{eq:specialblock1}, \eqref{eq:specialblock2} that 
    \begin{align}
        &\Phi\big(p_k/2\ketbrar{\alpha_k}\oplus p_{j_1}\ketbrar{\alpha_{j_1}},q_k/2\ketbrar{\beta_k}\oplus0\big)\\
        &\quad\quad=\Phi\big(\ketbrar{\alpha_k},\ketbrar{\beta_k}\big)(p_k/2+p_{j_1})\\
        &\quad\quad=\Phi\big(\ketbrar{\alpha_k},\ketbrar{\beta_k}\big)(\Phi\big(p_k/2,q_k/2\big)+p_{j_1})\\        &\quad\quad=\Phi\big(p_k/2\ketbrar{\alpha_k},q_k/2\ketbrar{\beta_k}\big)+p_{j_1}\Phi\big(\ketbrar{\alpha_k},\ketbrar{\beta_k}\big). 
    \end{align}
    Now it follows from \eqref{eq:split1}, \eqref{eq:split2} that 
    \begin{equation}
        \Phi(A,B)=\Phi\big(p_k\ketbrar{\alpha_k}\oplus\tilde A,q_k\ketbrar{\beta_k}\oplus\tilde B\big)+p_{j_1}\Phi(\ketbrar{\alpha_k},\ketbrar{\beta_k}), 
    \end{equation}
    allowing us to set $\Phi(p_{j_1}\ketbrar{\alpha_{j_1}},0)=p_{j_1}\Phi(\ketbrar{\alpha_k},\ketbrar{\beta_k})$. Analogously one can show that, when there is $j_2$ such that $\ket{\alpha_{j_2}}=0$ and $\ket{\beta_{j_2}}\neq0$, and $\Phi$ does not restrict to $\tr[B']$ on commuting pairs $(A',B')$, we may set $\Phi(0,q_{j_2}\ketbrar{\beta_{j_2}})=0$, and otherwise $\Phi(0,q_{j_2}\ketbrar{\beta_{j_2}})=q_{j_2}\Phi(\ketbrar{\alpha_k},\ketbrar{\beta_k})$. We conclude that \eqref{eq:decompositiontemperate} with the conventions \eqref{eq:define0trA}, \eqref{eq:define0trB}, \eqref{eq:define0otherwise} are correct for $\K=\R_+,\R_+^{\rm op}$. 
    
    Next, assume $\K=\T\R_+$ or $\T\R_+^{\rm op}$. Recall that $\Phi$ must restrict to $\Phi(A,B)=1$ for commuting pairs $(A,B)\neq 0$ and $\Phi(0,0)=0$. Assume there is a $j_1$ such that $\ket{\alpha_{j_1}}\neq0$ and $\ket{\beta_{j_1}}=0$. Using \eqref{eq:specialblock1}, \eqref{eq:specialblock2}, we now have 
    \begin{equation}
        \Phi\big(p_k/2\ketbrar{\alpha_k}\oplus p_{j_1}\ketbrar{\alpha_{j_1}},q_k/2\ketbrar{\beta_k}\oplus0\big)=\Phi\big(\ketbrar{\alpha_k},\ketbrar{\beta_k}\big). 
    \end{equation}    
    Then, using similar arguments as above using \eqref{eq:split1}, \eqref{eq:split2}, and remembering that addition in $\T\R_+,\T\R_+^{\rm op}$ is defined by taking the maximum of two numbers, we may conclude that 
    \begin{equation}
        \Phi(A,B)=\Phi\big(p_k\ketbrar{\alpha_k}\oplus\tilde A,q_k\ketbrar{\beta_k}\oplus\tilde B\big). 
    \end{equation}
    The case of $j_2$ such that $\ket{\alpha_{j_2}}=0$ and $\ket{\beta_{j_2}}\neq0$ can be handled analogously. We conclude that \eqref{eq:decompositiontropical} with the convention \eqref{eq:define0otherwise} is correct for $\K=\T\R_+,\T\R_+^{\rm op}$. 
    
    In the next part of the proof, we will first exclude certain ranges that the $\alpha$ and $z$ parameters may be in by providing counterexamples to the monotonicity property of homomorphisms. Similarly, we will use monotonicity to show that there are no non-zero monotone derivations. Finally, we prove that the homomorphisms that remain are in fact monotone. 
    
    Fix a monotone homomorphism $\Phi:S_{\rm m.r.}\rightarrow\K$, where $\K\in\{\R_+,\R_+^{\rm op},\allowbreak\T\R_+,\T\R_+^{\rm op}\}$. From the results in Propositions \ref{propo:Uhlmann} and \ref{prop:z} on majorization of pairs of pure states, and using the monotonicity of $\Phi$, it follows that the parameter $z$ must be non-positive when ${\mb K}=\R_+,\T\R_+$ and non-negative when ${\mb K}=\R_+^{\rm op},\T\R_+^{\rm op}$. 
    
    Let us assume $\mb K=\R_+^{\rm op}$ first. In the following, we will consider case-by-case the different ways that $\Phi$ restricts to commuting pairs. 
    
    The quantum channel arising from measuring a qubit in a fixed basis $\ket{0},\ket{1}$ yields the following majorization relation: 
    \begin{equation}
    \label{eq:alphazrelationship1}
        \left(\,\ketbrar{\psi(\pi/4)},\ketbrar{0}\,\right)\succeq(1/2,1)\boxplus(1/2,0), 
    \end{equation}
    where $|\psi(\theta)\rangle$ is defined as in \eqref{eq:anglevectors}. Consider the case that $\Phi$ is of the form \eqref{eq:Phitemperate} with $\alpha\in(0,1)$ when restricted to commuting pairs. Evaluating $\Phi$ on both sides of the inequality \eqref{eq:alphazrelationship1}, it follows by monotonicity that 
    \begin{equation}
    \label{eq:alphazrelationship2}
        \frac{1}{2^z}\leq\frac{1}{2^\alpha}. 
    \end{equation}
    By switching around the states in the two pairs in \eqref{eq:alphazrelationship1}, we have similarly
    \begin{equation}
    \label{eq:alphazrelationship3}
        \frac{1}{2^z}\leq\frac{1}{2^{1-\alpha}}. 
    \end{equation}    
    Hence \sloppy{${z\geq\max\{\alpha,1-\alpha\}}$}. It follows by the decomposition \eqref{eq:decompositiontemperate} with the convention \eqref{eq:define0otherwise} that $\Phi$ must be as in \eqref{eq:temperate} with $\sloppy{\alpha\in(0,1)}$ and ${z\geq\max\{\alpha,1-\alpha\}}$. 
    
    Next, consider the case that $\Phi$ is one of \eqref{eq:Phicharacter} when restricted to a commuting pair. When it restricts to the second form in \eqref{eq:Phicharacter}, monotonicity and \eqref{eq:alphazrelationship1} imply that 
    \begin{equation}
        \frac{1}{2^z}\leq\frac{1}{2}. 
    \end{equation}
    Hence, $z\geq1$. Similarly, when $\Phi$ restricts to the first form in \eqref{eq:Phicharacter}, by switching around the states in the two pairs in \eqref{eq:alphazrelationship1}, we find $z\geq1$. It follows for these latter two cases that $\Phi$ is equal to \eqref{eq:temperate} with $\alpha=0$ or $1$, and $z\geq1=\max\{\alpha,1-\alpha\}$. 

    A measure-and-prepare channel establishes the following majorization relation: 
    \begin{equation}\label{eq:preparation}        (1/2,1/2)\boxplus(1/2,0)\boxplus(0,1/2)\succeq(1/2,1/2)\boxplus(1/2\ketbrar{0},1/2\ketbrar{\psi(\pi/4)}). 
    \end{equation}
    Consider the remaining case that $\Phi$ is equal to $\tr[A]$ or $\tr[B]$ on commuting pairs. By monotonicity, it follows from \eqref{eq:preparation} that 
    \begin{equation}\label{eq:preparationinequality}
        1\leq\frac{1}{2}+\frac{1}{2}\frac{1}{2^z}.
    \end{equation} 
    Since $z\geq0$, this implies that $z=0$. Hence, using the convention \eqref{eq:define0trA} or \eqref{eq:define0trB}, $\Phi$ is one of the degenerate homomorphisms $\Phi(A,B)=\tr[A]$ or $\tr[B]$ for all $(A,B)\in S_{\rm m.r.}$. 

    Assume now that ${\mb K}=\R_+$. Then $\Phi$ equals $\tr[A]$ or $\tr[B]$ on commuting pairs. Monotonicity and \eqref{eq:preparation} now yield \eqref{eq:preparationinequality} with the inequality reversed. Since $z\leq0$, it follows that $z=0$. Again, $\Phi$ is one of the degenerate cases $\Phi(A,B)=\tr[A]$ or $\tr[B]$ for all $(A,B)\in S_{\rm m.r.}$. 

    Next, assume ${\mb K}=\T\R_+^{\rm op}$. Then $\Phi$ is the highly degenerate $\Phi(A,B)=1$ when $(A,B)\neq 0$ and $\Phi(0,0)=0$ on commuting pairs. Also, $z\geq0$. Using the decomposition \eqref{eq:decompositiontropical} with the convention \eqref{eq:define0otherwise}, we see that when $z=0$ we get the degenerate homomorphism that is equal to $1$ for all non-zero elements in $S_{\rm m.r.}$. When $z>0$, we find that $\Phi$ must be as in \eqref{eq:tropicalz}. 
    
    Finally, assume ${\mb K}=\T\R_+$. Then $\Phi$ is again the highly degenerate $\Phi(A,B)=1$ when $(A,B)\neq 0$ and $\Phi(0,0)=0$ on commuting pairs. The majorization relation \eqref{eq:preparation} implies that 
    \begin{equation}
    \label{eq:zrelationship2}
        1\geq\max\left\{1,\frac{1}{2^z}\right\}, 
    \end{equation}    
    Since $z\leq0$, this implies $z=0$. Hence, using the decomposition \eqref{eq:decompositiontropical} with the convention \eqref{eq:define0otherwise}, we get the degenerate homomorphism that is equal to $1$ for all non-zero elements in $S_{\rm m.r.}$. 

    We now turn to studying the monotone derivations on $S_{\rm m.r.}$ at the two degenerate homomorphisms $\Phi(A,B)=\tr[A]$ or $\tr[B]$ for all $(A,B)\in S_{\rm m.r.}$. Let $\Delta:S_{\rm m.r.}\rightarrow\R$ be a monotone derivation at $\Phi(A,B)=\tr[A]$, \sloppy{${(A,B)\in S_{\rm m.r.}}$}. Observe that for any $(A,B)\in S_{\rm m.r.}$, 
    \begin{equation}
        (\tr[A],\tr[B])\boxplus(1,1)\preceq(A,B)\boxplus(1,1)\preceq(1,1)\boxplus(\tr[A],0)\boxplus(0,\tr[B]), 
    \end{equation}
    where the second majorization is achieved by using a measure-and-prepare channel. By interchangeability, we may assume that $\Delta$ vanishes on commuting pairs. Since the outer left and outer right sides of the above inequalities are commuting pairs, monotonicity and additivity of $\Delta$ yields 
    \begin{equation}
        0=\Delta((A,B)\boxplus(1,1))=\Delta(A,B)+\underbrace{\Delta(1,1)}_{=0}. 
    \end{equation}
    Hence, $\Delta$ vanishes everywhere on $S_{\rm m.r.}$. The same conclusion holds for any derivation at the other degenerate monotone homomorphism $\Phi(A,B)=\tr[B]$, $(A,B)\in S_{\rm m.r.}$. 

    Finally, we show that the maps in \eqref{eq:temperate} and \eqref{eq:tropicalz} are in fact monotone homomorphisms with trivial kernel. Additivity and multiplicativity are immediate from the decompositions \eqref{eq:decompositiontemperate}, \eqref{eq:decompositiontropical} together with the multiplicativity of the factors appearing in the decompositions. Also, they all have a trivial kernel, since for every non-zero $(A,B)\in S_{\rm m.r.}$ there is at least one $k$ such that $\braket{\alpha_k}{\beta_k}\neq0$. The fact that they are all monotone can be seen as follows. The homomorphisms $\Phi_{\alpha,z}$, $\Phi^\T$ give rise to the relative entropies $\hat{D}_{\alpha,z}$, $\hat{D}^\T$ defined in \eqref{eq:alphazexplicit}, \eqref{eq:DT}. Since DPI is known to be satisfied for the $\alpha$-$z$ relative entropies when $\alpha\in(0,1)$ and $z\geq\max\{\alpha,1-\alpha\}$ \cite{Zhang2020} (and hence also for our version with a different prefactor), the pointwise limits 
    \begin{equation}
        \hat D^\T=\lim_{z\rightarrow\infty}\hat{D}_{\alpha,z}\ \text{(independent of}\ \alpha\text{)}, 
    \end{equation}
    and, for $z\geq1$, 
    \begin{equation}
        \hat D_{0,z}=\lim_{\alpha\rightarrow0}\hat{D}_{\alpha,z},\quad\hat D_{1,z}=\lim_{\alpha\rightarrow1}\hat{D}_{\alpha,z} 
    \end{equation}    
    also satisfy DPI. Since these relative entropies satisfy DPI and the prefactors in front of $\log$ are negative, the homomorphisms that they are associated with (namely $\Phi_{\alpha,z}$ for $\alpha\in[0,1]$, $z\geq\max\{\alpha,1-\alpha\}$, and $\Phi^\T$) are monotone according to the opposite ordering of $\R_+^{\rm op}$, $\T\R_+^{\rm op}$. Since 
    \begin{equation}
        \Phi_z^\T(A,B)=\left(\Phi^\T(A,B)\right)^z
    \end{equation}
    for all $(A,B)\in S_{\rm m.r.}$ and $z>0$, we find that $\Phi_z^\T$ for any $z>0$ is also monotone. 
    \end{proof}

    \begin{remark}\label{rmk:Seo}
    For the semiring $S_{\rm e.o.}$ we do not have a complete picture: we know that all monotone homomorphisms in Proposition \ref{prop:monhom} associated with $S_{\rm m.r.}$ are also associated with $S_{\rm e.o.}$, but there might be more. Namely, it is still an open question whether the homomorphisms for $\K=\R_+$ or $\R_+^{\rm op}$ that have the same form as \eqref{eq:temperate}, but with parameters $\alpha=0$, $z<1$, or $\alpha=1$, $z<1$, are monotone or not. The same question of monotonicity can also be asked for the homomorphisms for $\K=\T\R_+$ that are similar to \eqref{eq:tropicalz}, but with $z<0$. Additionally, there might exist monotone derivations that do not vanish everywhere. In the following, we will show that we can rule out the existence of any $\Phi$ other than the aforementioned additional ones or the ones associated with $S_{\rm m.r.}$, using techniques similar to the proof of Proposition \ref{prop:monhom}. We suspect that these additional homomorphisms are in fact not monotone either, since they correspond to parameters $\alpha,z$ for which the $\alpha$-$z$ relative entropy is known not to satisfy DPI for general pairs of quantum states (with some conditions on the support). However, in principle, DPI might be satisfied when restricting to the specific pairs of states we consider in this work. 

    Fix a monotone homomorphism $\Phi:S_{\rm e.o.}\rightarrow\K$, where $\K\in\{\R_+,\R_+^{\rm op},\allowbreak\T\R_+,\T\R_+^{\rm op}\}$. First, let us summarize results from \cite{farooq2024} on the monotone homomorphisms and derivations associated to the \emph{matrix majorization semiring of length} $2$, which consists precisely of all commuting pairs in $S_{\rm e.o.}$. When $\K=\R_+^{\rm op}$, $\Phi$ restricts to \eqref{eq:Phitemperate} for some $\alpha\in[0,1]$, on commuting pairs. When $\K=\R_+$, $\Phi$ restricts to \eqref{eq:Phitemperate} for some $\alpha\in(-\infty,0]\cup[1,\infty)$. Note that for $\K=\R_+^{\rm op}$ or $\R_+$ the cases $\alpha=0$, respectively $\alpha=1$, correspond to the degenerate homomorphisms $\tr[B]$, respectively $\tr[A]$, for a commuting pair $(A,B)$. When $\K=\T\R_+$, $\Phi$ on commuting pairs restricts to 
    \begin{equation}\label{eq:Phitropicalcommuting}
    \Phi(A,B)=\max_{1\leq i\leq n}\left(\frac{p_i}{q_i}\right)^\beta, 
    \end{equation}
    for some $\beta\in\R$. For $\beta=0$, this corresponds to the highly degenerate $\Phi(A,B)=1$ for $(A,B)\neq 0$ a commuting pair and $\Phi(0,0)=0$. When $\K=\T\R_+^{\rm op}$, $\Phi$ always restricts to this highly degenerate case on commuting pairs. The two degenerate homomorphisms $\tr[A],\tr[B]$ for commuting pairs $(A,B)$ both have non-zero monotone derivations associated with them (similar in form to the Kullback-Leibler divergence), but we will not use these here. 

    Note that the pair on the right-hand side in \eqref{eq:alphazrelationship1} is not inside $S_{\rm e.o.}$. However, for the case $\K=\R_+^{\rm op}$ and $\alpha\in(0,1)$, we can still deduce that $\Phi$ is of the form \eqref{eq:temperate} with $z\geq\max\{\alpha,1-\alpha\}$. To see this, consider again measuring a qubit in a fixed basis $\ket{0},\ket{1}$: 
    \begin{equation}
    \label{eq:alphazrelationship1eo}
        \left(\,\ketbrar{\psi(\pi/4)},\ketbrar{\psi(\theta)}\,\right)\succeq(1/2,\cos^2\theta)\boxplus(1/2,\sin^2\theta), 
    \end{equation}
    where $\theta\in(0,\pi/2)$, which ensures that the pair on the right-hand side above is in $S_{\rm e.o.}$. Assume $\K=\R_+^{\rm op}$ and $\alpha\in(0,1)$. Using monotonicity of $\Phi$, \eqref{eq:alphazrelationship1eo} yields 
    \begin{equation}
    \label{eq:alphazrelationship2eo}
        \left(\cos(\theta-\pi/4)\right)^{2z}\leq\frac{1}{2^\alpha}\left((\cos\theta)^{2-2\alpha}+(\sin\theta)^{2-2\alpha}\right) 
    \end{equation}
    for all $\theta\in(0,\pi/2)$, and by switching around the states in the pairs in \eqref{eq:alphazrelationship1eo}, we have similarly 
    \begin{equation}
    \label{eq:alphazrelationship3eo}
        \left(\cos(\theta-\pi/4)\right)^{2z}\leq\frac{1}{2^{1-\alpha}}\left((\cos\theta)^{2\alpha}+(\sin\theta)^{2\alpha}\right). 
    \end{equation}
    Taking the limit $\theta\rightarrow\pi/2$ in the above two inequalities allows us to conclude that $\sloppy{z\geq\max\{\alpha,1-\alpha\}}$ for $\alpha\in(0,1)$. Note that for $\alpha=0$ or $1$, we are not able to use this method to conclude that $z\geq1=\max\{\alpha,1-\alpha\}$. 

    Next, assume $\K=\R_+$ and $\alpha\in(-\infty,0)\cup(1,\infty)$. Note that \eqref{eq:alphazrelationship2eo} and \eqref{eq:alphazrelationship3eo} hold with the inequality reversed. Assume $\alpha>1$. When taking the limit $\theta\rightarrow\pi/2$, the left-hand side of \eqref{eq:alphazrelationship2eo} converges to $\frac{1}{2^z}$, but the right-hand side diverges to positive infinity, leading to a contradiction. When $\alpha<0$, a similar argument using \eqref{eq:alphazrelationship3eo} leads to a contradiction. This rules out the existence of any monotone homomorphism on $S_{\rm e.o.}$ associated with $\alpha$ outside $[0,1]$. 

    Now, we assume $\K=\T\R_+$ and $\Phi$ is of the form \eqref{eq:Phitropicalcommuting} for some $\beta\neq0$ when restricted to commuting pairs. Now \eqref{eq:alphazrelationship1eo} implies 
    \begin{equation}
    \label{eq:TR+}
        \left(\cos(\theta-\pi/4)\right)^{2z}\geq\frac{1}{2^\beta}\max\left\{(\cos\theta)^{-2\beta},(\sin\theta)^{-2\beta}\right\} 
    \end{equation}
    for all $\theta\in(0,\pi/2)$. Assume $\beta>0$. Then for $\theta$ restricted to the smaller interval $[\pi/4,\pi/2)$, 
    \begin{equation}
        \left(\cos(\theta-\pi/4)\right)^{2z}\geq\frac{1}{2^\beta}(\cos\theta)^{-2\beta}. 
    \end{equation}
    When taking the limit $\theta\rightarrow\pi/2$, the left-hand side converges to $\frac{1}{2^z}$, but the right-hand side diverges to positive infinity, leading to a contradiction. When $\beta<0$, an analogous argument where we switch around the states in the pairs in \eqref{eq:alphazrelationship1eo} leads to a contradiction. Note that we cannot rule out the case $\beta=0$ using this method. Hence there might exist monotone homomorphisms that are similar to \eqref{eq:tropicalz}, but with $z<0$. 

    Finally, assume $\K=\T\R_+^{\rm op}$. Analogously as in the case of the semiring $S_{\rm m.r.}$, we can conclude that any non-degenerate $\Phi$ must be of the form \eqref{eq:tropicalz} for some $z>0$.  
    \end{remark}

    \begin{remark}
        Proposition \ref{prop:puS} tells us that any $(A,B)\in S_{\rm e.o.}$, where $A,B$ have trace $1$ and satisfy $|\braket{\alpha_i}{\beta_i}|<1$ for all $i=1,\ldots,n$, is power universal. We can now show that these are in fact all possible power universals in $S_{\rm e.o.}$, using the tropical monotone homomorphism $\Phi^\T$ defined in \eqref{eq:tropical}. 

        To see this, let $(A,B)\in S_{\rm e.o.}$ be power universal. Then there exists $m\in\N$ such that $(A,B)^{\boxtimes m}\succeq(\ketbrar{0},\ketbrar{\psi(\pi/4)})$. Hence, 
        \begin{equation}
            \left(\Phi^\T(A,B)\right)^m\leq\Phi^\T(\ketbrar{0},\ketbrar{\psi(\pi/4)})=1/2<1. 
        \end{equation}
        Therefore, $\Phi^\T(A,B)<1$, from which it follows that $|\braket{\alpha_i}{\beta_i}|<1$ for all $i=1,\ldots,n$. 
    \end{remark}


    \section{Deriving the Main Results}\label{sec:proofs}

    We next show how our main results Theorems \ref{thm:LS}, \ref{thm:approximateLS}, \ref{thm:minimal} and \ref{thm:rates} follow from the analysis of the semiring $S_{\rm m.r.}$ in the previous section. 

    \subsection{Exact Majorization}\label{sec:exactproof}

    Our result Theorem \ref{thm:LS} is a direct application of the Vergleichsstellensatz in the form of Theorem \ref{thm:Fritz2022}. Note that the monotone homomorphisms associated with $S_{\rm m.r.}$ (see Proposition \ref{prop:monhom}) give rise to the relative entropies \eqref{eq:alphazexplicit} and \eqref{eq:DT}. 

    The surjective homomorphism with trivial kernel $\|\cdot\|:S\to\R_{>0}^2\cup\{(0,0)\}$ needed in the application of Theorem \ref{thm:Fritz2022} is simply given by $\|(A,B)\|=(\tr[A],\tr[B])$. The first property in \eqref{eq:surjectivehomomorphismproperties} follows from the trace preservation property of quantum channels, and the the second property from the following majorization relations: 
    \begin{equation}
        (A,B)\succeq(\tr[A],\tr[B])=(\tr[A'],\tr[B'])\preceq(A',B')
    \end{equation}
    for any $(A,B),(A',B')\in S_{\rm m.r.}$ satisfying $\|(A,B)\|=\|(A',B')\|$. 

    Note that \eqref{eq:pu1} is automatically satisfied when condition \eqref{eq:LScondition2} is met: the latter implies that $\hat D^\T(\rho\|\sigma)>0$, which is equivalent to $\max_{i=1,\ldots,n}|\braket{\alpha_i}{\beta_i}|^{2}<1$, which in turn implies \eqref{eq:pu1}. By Proposition \ref{prop:pumr}, for $(\rho,\sigma)$ in the statement of the Theorem to be power universal it thus suffices to only require it satisfies \eqref{eq:pu2}, besides the inequalities \eqref{eq:LScondition1}, \eqref{eq:LScondition2}. 
    
    Our main result, Theorem \ref{thm:LS}, now follows immediately from Propositions \ref{prop:monhom} and \ref{prop:pumr}, and Theorem \ref{thm:Fritz2022}. 

    \subsection{Asymptotic Majorization}\label{sec:approximate}

    As a consequence of Theorem \ref{thm:LS}, we can formulate a similar result where the inequalities in terms of relative entropies are only required to be satisfied \emph{non-strictly}. In this case, large-sample and catalytic majorization only hold asymptotically, in the sense that we reach only one of the two states in the output pair exactly and the other state approximately, up to arbitrarily small error. This result is contained in Theorem \ref{thm:approximateLS}, which we prove here. 


    \begin{proof}[Proof of Theorem \ref{thm:approximateLS}]
        First, note that satisfying the inequalities in part (i) of the Theorem is equivalent to $\hat D_{\alpha,z}(\rho\|\sigma)\geq \hat D_{\alpha,z}(\rho'\|\sigma')$ for all $\alpha\in[0,1]$, $z\geq\max\{\alpha,1-\alpha\}$, and additionally $\hat D^\T(\rho\|\sigma)\geq \hat D^\T(\rho'\|\sigma')$. 

        Also, by Proposition \ref{prop:pumr}, $(\rho,\sigma)$ is power universal if and only if it satisfies \eqref{eq:pu1} and \eqref{eq:pu2}. Note that Theorem \ref{thm:Fritz2022} only requires $(\rho,\sigma)$ to be power universal in the large-sample setting. This is why we require \eqref{eq:pu2} to hold for equivalence of all conditions (i), (ii), (iii). We require property \eqref{eq:pu1} to hold both in the large-sample and catalytic setting since our construction in this proof relies on it. However, since we will invoke Theorem \ref{thm:LS}, $(\rho,\sigma)$ has to satisfy \eqref{eq:pu1} anyway if condition \eqref{eq:LScondition2} is to be met, as explained in Section \ref{sec:exactproof}. 
        
        (i) $\Rightarrow$ (ii) and (iii): We write
        \begin{equation}
        \rho'=\bigoplus_{i=1}^{n'}p'_i\ketbrar{\alpha'_i},\quad\sigma'=\bigoplus_{i=1}^{n'}q'_i\ketbrar{\beta'_i}, 
        \end{equation}
        with all $\ket{\alpha'_i},\ket{\beta'_i}$ normalized or equal to $0$. Let $J\subseteq[n']$ be the subset of all $i$ where $0<|\braket{\alpha'_i}{\beta'_i}|<1$, which is non-empty since $\rho'$, $\sigma'$ do not commute. One shows analogously as in the discussion above \eqref{eq:pairstandardform} that each pair $(\ketbrar{\alpha'_i},\ketbrar{\beta'_i})$, $i\in J$, is equivalent (according to the equivalence relation $\approx$ defined in Section \ref{sec:definingthesemirings}) to the pair 
        \begin{align}
            \big(\cos^2\theta_i&\ketbrar{e_{i,1}}+\cos\theta_i\sin\theta_i\ketbra{e_{i,1}}{e_{i,2}}\\
            &+\cos\theta_i\sin\theta_i\ketbra{e_{i,2}}{e_{i,1}}+\sin^2\theta_i\ketbrar{e_{i,2}},\ketbrar{e_{i,1}}\big), 
        \end{align}
        for some angle $\theta_i\in(0,\pi/2]$, and $\ket{e_{i,1}},\ket{e_{i,2}}$ an orthonormal basis of a $2$-dimensional Hilbert space. Choose $\eta\in(0,\pi/2]$ such that $\eta<\theta_i$ for all $i\in J$ and $\cos\eta\geq\sqrt{1-\varepsilon}$. For each $i\in J$, we define 
        \begin{equation}
            \ket{\alpha'_{\varepsilon,i}}:=\cos(\theta_i-\eta)\ket{e_{i,1}}+\sin(\theta_i-\eta)\ket{e_{i,2}}, 
        \end{equation}
        and for each $i\in[n']\setminus J$, we define $\ket{\alpha'_{\varepsilon,i}}:=\ket{\alpha'_i}$. Consider 
        \begin{equation}
            \rho'_\varepsilon:=\bigoplus_{i=1}^{n'}p'_i\ketbrar{\alpha'_{\varepsilon,i}}. 
        \end{equation}
        One computes that $|\braket{\alpha'_{\varepsilon,i}}{\alpha'_i}|=\cos(\eta)\geq\sqrt{1-\varepsilon}$ for all $i\in J$, hence the fidelity between $\rho'_\varepsilon$ and $\rho'$ satisfies 
        \begin{equation}
            F(\rho'_\varepsilon,\rho')=\left(\sum_{i=1}^{n'} p'_i|\braket{\alpha'_{\varepsilon,i}}{\alpha'_i}|\right)^2\geq1-\varepsilon. 
        \end{equation}

        Also, for all $i\in J$ 
        \begin{equation}
            |\braket{\alpha'_i}{\beta'_i}|=\cos(\theta_i)<\cos(\theta_i-\eta)=|\braket{\alpha'_{\varepsilon,i}}{\beta'_i}|, 
        \end{equation}
        hence for all $\alpha\in[0,1]$, $z\geq\max\{\alpha,1-\alpha\}$ 
        \begin{equation}
            \hat D_{\alpha,z}(\rho\|\sigma)\geq \hat D_{\alpha,z}(\rho'\|\sigma')>\hat D_{\alpha,z}(\rho'_\varepsilon\|\sigma'). 
        \end{equation}
        Assume $\hat D^\T(\rho'\|\sigma')>0$, i.e. $\max_{i=1,\ldots,n'}|\braket{\alpha'_i}{\beta'_i}|^{2}<1$. Then this maximum is obtained by some subspace in $J$, from which it follows that $\max_{i=1,\ldots,n'}|\braket{\alpha'_i}{\beta'_i}|^{2}<\max_{i=1,\ldots,n'}|\braket{\alpha'_{\varepsilon,i}}{\beta'_i}|^{2}$, whence $\hat D^\T(\rho\|\sigma)\geq \hat D^\T(\rho'\|\sigma')>\hat D^\T(\rho'_\varepsilon\|\sigma')$. In case $\hat D^\T(\rho'\|\sigma')=0$, it follows immediately that $\hat D^\T(\rho\|\sigma)>\hat D^\T(\rho'_\varepsilon\|\sigma')$, since $\hat D^\T(\rho\|\sigma)>0$ by \eqref{eq:pu1}. 

        The pairs $(\rho,\sigma)$ and $(\rho'_\varepsilon,\sigma')$ thus satisfy the strict inequalities \eqref{eq:LScondition1} and \eqref{eq:LScondition2} of Theorem \ref{thm:LS}, hence (ii) and (iii) follow. 

        (ii) or (iii) $\Rightarrow$ (i): Using the tensor additivity and the DPI property of the $D_{\alpha,z}$, large-sample majorization as in (ii) implies that for all $\varepsilon>0$ and all $\alpha\in(0,1)$, $z>\max\{\alpha,1-\alpha\}$ 
        \begin{equation}\label{eq:approximateinequality}
            D_{\alpha,z}(\rho\|\sigma)\geq D_{\alpha,z}(\rho'_\varepsilon\|\sigma') 
        \end{equation}
        for a state $\rho'_\varepsilon$ satisfying $F(\rho'_\varepsilon,\rho')\geq1-\varepsilon$. Similarly, the same is implied by catalytic majorization as in (iii). 

        We may assume that $\rho'_\varepsilon$ resides in a Hilbert space $\Hil'_\varepsilon$ possibly larger than $\Hil'$ where $\rho'$ and $\sigma'$ operate. Let $P_\varepsilon$ be the projection of $\Hil'_\varepsilon$ onto $\Hil'$ and $P_\varepsilon^\perp$ be the projection onto the orthogonal complement of $\Hil'$ within $\Hil'_\varepsilon$. Set up the channel $\mc D_\varepsilon$, $\mc D_\varepsilon(\tau)=P_\varepsilon\tau P_\varepsilon+P_\varepsilon^\perp\tau P_\varepsilon^\perp$. Denoting $P_\varepsilon\rho'_\varepsilon P_\varepsilon=:{\rho'_\varepsilon}^0$ and $P_\varepsilon^\perp\rho'_\varepsilon P_\varepsilon^\perp=:{\rho'_\varepsilon}^1$, we have $\mc D_\varepsilon(\rho'_\varepsilon)={\rho'_\varepsilon}^0\oplus{\rho'_\varepsilon}^1$ and $\mc D_\varepsilon(\sigma')=\sigma'$ when we view $\sigma'$ as a state on $\Hil'_\varepsilon$.
        Note that $\lim_{\varepsilon\rightarrow0}({\rho'_\varepsilon}^0,\sigma')=(\rho',\sigma')$. 
        
        Next, for $\alpha\in(0,1)$ and $z>\max\{\alpha,1-\alpha\}$, consider the expression 
        \begin{equation}
            \Psi_{\alpha,z}(\tau,\omega):=\tr\left[\left(\omega^\frac{1-\alpha}{2z}\tau^\frac{\alpha}{z}\omega^\frac{1-\alpha}{2z}\right)^z\right] 
        \end{equation}
        inside the $\log$ of $D_{\alpha,z}$ as in \eqref{eq:alphazdivergence}. One checks that this is additive under $\oplus$, i.e. $\Psi_{\alpha,z}(\rho_1\oplus\rho_2,\sigma_1\oplus\sigma_2)=\Psi_{\alpha,z}(\rho_1,\sigma_1)+\Psi_{\alpha,z}(\rho_2,\sigma_2)$, and is non-decreasing when applying a quantum channel by the DPI of $D_{\alpha,z}$. Then, these properties and \eqref{eq:approximateinequality} imply 
        \begin{equation}
            \Psi_{\alpha,z}(\rho,\sigma)\leq\Psi_{\alpha,z}(\rho'_\varepsilon,\sigma')\leq\Psi_{\alpha,z}({\rho'_\varepsilon}^0,\sigma')+\underbrace{\Psi_{\alpha,z}({\rho'_\varepsilon}^1,0)}_{=0}. 
        \end{equation}
        Since the $\Psi_{\alpha,z}$ are continuous on pairs of fixed dimension, and $\lim_{\varepsilon\rightarrow0}({\rho'_\varepsilon}^0,\sigma')=(\rho',\sigma')$, we conclude that the inequalities in part (i) hold. 
    \end{proof}


    \subsection{Minimality of the Family of Relative Entropies}\label{sec:minimal}

    Here we prove Theorem \ref{thm:minimal} using an idea from the proof of Corollary 2.7 in \cite{strassen1988}. 
    \begin{proof}[Proof of Theorem \ref{thm:minimal}]
        Consider the set $\bar R$, containing the parameter range $R$, defined by 
        \begin{equation}\label{eq:Rbar}
            \bar R:=\big\{\,(\alpha,z)\,|\,\alpha\in[0,1],\ z\geq\max\{\alpha,1-\alpha\}\,\big\}\cup\big\{\,(\alpha,\infty)\,|\,\alpha\in[0,1]\,\big\}, 
        \end{equation}
        and define the injective map $\varphi:\bar R\rightarrow\R^2$ by 
        \begin{equation}
            \varphi(\alpha,z)=\left\{\begin{array}{ll}\big(\alpha,\frac{z}{z+1}\big)&\text{if }z\neq\infty\\
            (\alpha,1)&\text{if }z=\infty\end{array}\right..
        \end{equation}
        We define a topology on $\bar R$ by requiring that $U$ is open in $\bar R$ if and only if $U=\varphi^{-1}(V)$ for some open $V$ in $\R^2$ with the Euclidean topology. The subspace topology that $R$ inherits from $\bar R$ coincides with the Euclidean topology. One checks that the image $\varphi(\bar R)$ is a compact subset of $\R^2$, hence $\bar R$ is compact and Hausdorff. 

        For any $(A,B)\in S_{\rm m.r.}$, the evaluation function $f_{(A,B)}:\bar R\rightarrow\R$ is defined for $(\alpha,z)\in\bar R$ when $z\neq\infty$ by 
        \begin{equation}
            f_{(A,B)}(\alpha,z):=\Phi_{\alpha,z}(A,B)=\sum_{i=1}^n(p_i)^\alpha(q_i)^{1-\alpha}|\braket{\alpha_i}{\beta_i}|^{2z}
        \end{equation}
        and when $z=\infty$ by 
        \begin{equation}
            f_{(A,B)}(\alpha,\infty):=\lim_{z\rightarrow\infty}\Phi_{\alpha,z}(A,B)=\sum_{i\in[n]\,:\,|\braket{\alpha_i}{\beta_i}|=1}(p_i)^\alpha(q_i)^{1-\alpha}, 
        \end{equation} 
        where we used that all $|\braket{\alpha_i}{\beta_i}|\leq1$ to compute the limit. Observe that $\sloppy{f_{(A,B)}\circ\varphi^{-1}:\varphi(\bar R)\rightarrow\bar R}$ is continuous since it is a composition of continuous functions on $\varphi(\bar R)\setminus([0,1]\times\{1\})$, and equals the limiting value on $[0,1]\times\{1\}$. Hence, $f_{(A,B)}$ is continuous. 

        Next, let $\mc A$ be the $\R$-algebra generated by multiplication of the functions $f_{(A,B)}$, $\sloppy{(A,B)\in S_{\rm m.r.}}$, by real numbers, and finite repetitions of point-wise addition and multiplication. Then $\mc A$ is a subalgebra of $C(\bar R,\R)$, the set of all continuous real-valued functions on $\bar R$. 
        
        Observe that $f_{(1,1)}(\alpha,z)=1$ for all $(\alpha,z)\in\bar R$, hence $\mc A$ contains a non-zero constant function. Clearly, for every two distinct $(\alpha,z),(\alpha',z')\in\bar R$ there exists a pair $\sloppy{(A,B)\in S_{\rm m.r.}}$ such that $f_{(A,B)}(\alpha,z)\neq f_{(A,B)}(\alpha',z')$, hence $\mc A$ separates points. Therefore, by the Stone-Weierstrass theorem, $\mc A$ is dense in $C(\bar R,\R)$. 

        The non-empty open subset $O$ in the statement of the Theorem contains an open ball, and without loss of generality we assume $O$ to be equal to an open ball. Clearly, there exists a continuous function $f:\bar R\rightarrow\R$ such that $f(\alpha,z)=1$ for all $(\alpha,z)$ outside $O$, and $f(\alpha_0,z_0)=-1$ for some $(\alpha_0,z_0)$ in $O$. Since $\mc A$ is dense in $C(\bar R,\R)$, $f$ can be approximated by functions in $\mc A$ with arbitrary precision, from which it follows that there exist $r_1,\ldots,r_t\in\R$ and $(A_1,B_1),\ldots,(A_t,B_t)\in S_{\rm m.r.}$ such that 
        \begin{equation}\label{eq:SWapproximation}
            r_1\Phi_{\alpha,z}(A_1,B_1)+\ldots+r_t\Phi_{\alpha,z}(A_t,B_t)\left\{\begin{array}{ll}>0&\text{if }\ (\alpha,z)\in R\setminus O\\<0&\text{if }\ (\alpha,z)=(\alpha_0,z_0)\end{array}\right..
        \end{equation}
        Here we have used the multiplicativity of the homomorphisms, which implies that $\mc A$ is the span of the set of all $f_{(A,B)}$, $(A,B)\in S_{\rm m.r.}$, with real coefficients. 
        Define the pairs 
        \begin{equation}
            (A',B'):=\left(\bigoplus_{i\,:\,r_i>0}r_iA_i,\bigoplus_{i\,:\,r_i>0}r_iB_i\right),\ (A,B):=\left(\bigoplus_{i\,:\,r_i<0}-r_iA_i,\bigoplus_{i\,:\,r_i<0}-r_iB_i\right). 
        \end{equation}
        Then, by rearranging \eqref{eq:SWapproximation}, if $(\alpha,z)\in R\setminus O$, 
        \begin{equation}\label{eq:counterexample1}
            \Phi_{\alpha,z}(A',B')>\Phi_{\alpha,z}(A,B),  
        \end{equation}
        and 
        \begin{equation}\label{eq:counterexample2}
            \Phi_{\alpha_0,z_0}(A',B')<\Phi_{\alpha_0,z_0}(A,B). 
        \end{equation}
        Note that it follows in particular that both pairs $(A,B),(A',B')$ are non-zero. 
        
        If $(A,B)$ is not non-parallel, i.e. there are $i$ such that $|\braket{\alpha_i}{\beta_i}|=1$, then for such $i$ we may rotate $\ket{\alpha_i},\ket{\beta_i}$ to ensure that $|\braket{\alpha_i}{\beta_i}|<1$. By doing so, $\Phi_{\alpha,z}(A,B)$ may decrease, hence \eqref{eq:counterexample1} remains true. By ensuring that the rotations are sufficiently small, \eqref{eq:counterexample2} also remains true. We conclude that we may assume that $(A,B)$ is non-parallel. 

        If $(A',B')$ is a commuting pair, i.e. $|\braket{\alpha_i}{\beta_i}|=1$ for all $i$, then $\Phi_{\alpha_0,z}(A',B')$ does not depend on $z$. However, $\Phi_{\alpha_0,z}(A,B)$ is strictly decreasing in $z$. We may assume that $O$ is a sufficiently small open ball such that there exists $z_1\in\R$ with $(\alpha_0,z_1)\in R\setminus O$ and $z_1<z_0$. There also exists $z_2\in\R$ with $(\alpha_0,z_2)\in R\setminus O$ and $z_0<z_2$. Then \eqref{eq:counterexample1} implies that both $\Phi_{\alpha_0,z_1}(A,B)$ and $\Phi_{\alpha_0,z_2}(A,B)$ are strictly less than the constant $\Phi_{\alpha_0,z_0}(A',B')$. Since $\Phi_{\alpha_0,z_0}(A,B)>\Phi_{\alpha_0,z_0}(A',B')$, this contradicts the fact that $\Phi_{\alpha_0,z}(A,B)$ is strictly decreasing in $z$. In conclusion, $(A',B')$ is not a commuting pair. 
        
        We add an orthogonal component to both pairs as follows: 
        \begin{equation}
            (\tilde{A},\tilde{B}):=(A,B)\boxplus(a\ketbrar{0},b\ketbrar{1}),\quad(\tilde{A'},\tilde{B'}):=(A',B')\boxplus(a'\ketbrar{0},b'\ketbrar{1}), 
        \end{equation}
        with $a,b>0$, $a',b'\geq0$ chosen such that 
        \begin{equation}
            T:=\tr[\tilde{A}]=\tr[\tilde{B}]=\tr[\tilde{A'}]=\tr[\tilde{B'}]. 
        \end{equation}
        The pairs $(\rho,\sigma):=(\tilde{A},\tilde{B})/T$ and $(\rho',\sigma'):=(\tilde{A'},\tilde{B'})/T$ are in $\mc F$, and the former is non-parallel and the later is not commutative. Since $a,b>0$, $(\rho,\sigma)$ additionally satisfies \eqref{eq:pu2}. Note that 
        \begin{align}
            &\Phi_{\alpha,z}(\rho,\sigma)=T^{-1}\Phi_{\alpha,z}(\tilde{A},\tilde{B})=T^{-1}\Phi_{\alpha,z}(A,B),\\
            &\Phi_{\alpha,z}(\rho',\sigma')=T^{-1}\Phi_{\alpha,z}(\tilde{A'},\tilde{B'})=T^{-1}\Phi_{\alpha,z}(A',B')
        \end{align}    
        for all $(\alpha,z)\in R$. Hence, it follows from \eqref{eq:counterexample1}, \eqref{eq:counterexample2} that 
        \begin{equation}\label{eq:contradiction}
            \Phi_{\alpha,z}(\rho',\sigma')>\Phi_{\alpha,z}(\rho,\sigma) 
        \end{equation}
        for all $(\alpha,z)\in R\setminus O$ and the reverse inequality for $(\alpha,z)=(\alpha_0,z_0)$.
       
        Now, \eqref{eq:contradiction} for all $(\alpha,z)\in R\setminus O$ implies the conditions \eqref{eq:smallercondition}. However, (ii) and (iii) in Theorem \ref{thm:approximateLS} are both false: if either were true, then they would imply (i), which contradicts the fact that \eqref{eq:contradiction} with the reverse inequality is true for $(\alpha,z)=(\alpha_0,z_0)$. 
    \end{proof}

    \subsection{Optimal Conversion Rate}\label{sec:rate}

    Recall that, given pairs $(\rho,\sigma)$ and $(\rho',\sigma')$, $r\geq0$ is called an achievable conversion rate if $r\leq\liminf_{n\to\infty}m_n/n$ where $(m_n)_n$ is a sequence of natural numbers such that $\big(\rho^{\otimes n},\sigma^{\otimes n}\big)\succeq\big((\rho')^{\otimes m_n},(\sigma')^{\otimes m_n}\big)$ for $n\in\N$ large enough. The optimal conversion rate $r\big((\rho,\sigma)\to(\rho',\sigma')\big)$ is defined as the supremum of all the achievable conversion rates. We now give a proof for Theorem \ref{thm:rates}, which gives an expression for the optimal conversion rate in terms of the ratio of relative entropies, analogous to the proof of Corollary 29 in \cite{verhagen2025}. 
            
    \begin{proof}[Proof of Theorem \ref{thm:rates}] 
        Define 
        \begin{equation}
            \mathcal{D}:=\big\{\,\hat{D}_{\alpha,z}\,\big|\,\alpha\in[0,1],z\geq\max\{\alpha,1-\alpha\}\,\big\}\cup\big\{\hat{D}^\T\big\}. 
        \end{equation}
        We prove that the optimal conversion rate is given by \eqref{eq:rate2} by first showing that any achievable conversion rate is upper bounded by $D(\rho\|\sigma)/D(\rho'\|\sigma')$ for any relative entropy $D$ (i.e.,\ any tensor additive monotone map on state pairs, not only for $D\in\mc D$). This proof is completely standard and does not use the earlier results. This shows that 
        \begin{equation}\label{eq:oneway}
        r((\rho,\sigma)\to (\rho',\sigma'))\leq\frac{D(\rho\|\sigma)}{D(\rho'\|\sigma')},\qquad{\rm for\ any\ relative\ entropy}\ D.
        \end{equation}
        Then we show that
        \begin{equation}\label{eq:otherway}
        r((\rho,\sigma)\to (\rho',\sigma'))\geq\min_{D\in\mc D}\frac{D(\rho\|\sigma)}{D(\rho'\|\sigma')}
        \end{equation}
        with a simple application of Theorem \ref{thm:LS}.
        
        Suppose that $r\geq0$ is $((\rho,\sigma),(\rho',\sigma'))$-achievable, i.e.,\ there is a sequence $(m_n)_{n=1}^\infty\in\N^\N$ such that $r\leq\liminf_{n\to\infty}m_n/n$ and $\big(\rho^{\otimes n},\sigma^{\otimes n}\big)\succeq\big((\rho')^{\otimes m_n},(\sigma')^{\otimes m_n}\big)$ for any $n\geq n_0$ where $n_0\in\N$ is some sufficiently large number. We fix, for now, a general relative entropy $D:S_{\rm m.r.}\setminus\{0\}\to\R$, i.e,\ 
        \begin{equation}
            D(\rho\otimes\rho'\|\sigma\otimes\sigma')=D(\rho\|\sigma)+D(\rho'\|\sigma') 
        \end{equation}
        for all $(\rho,\sigma),(\rho',\sigma')\in S_{\rm m.r.}$ and $(\rho,\sigma)\succeq(\rho',\sigma')$ $\Rightarrow$ $D(\rho\|\sigma)\geq D(\rho'\|\sigma')$. From the additivity and monotonicity it immediately follows that $m_n/n\leq D(\rho\|\sigma)/D(\rho'\|\sigma')$ for all $n\geq n_0$. From the definition of $((\rho,\sigma),(\rho',\sigma'))$-achievability it follows that, for all $\varepsilon>0$, there is $n_\varepsilon\in\N$ which we can choose to be larger than $n_0$ such that $r-\varepsilon\leq \inf_{n\geq n_\varepsilon} m_n/n$. But as $n_\varepsilon\geq n_0$, this means that $r-\varepsilon\leq D(\rho\|\sigma)/D(\rho'\|\sigma')$. As this holds for all $\varepsilon>0$, we have $r\leq D(\rho\|\sigma)/D(\rho'\|\sigma')$. Taking the supremum over achievable conversion rates and then the minimum over $D\in\mc D$, we get \eqref{eq:oneway}.
        
        Let us then assume that $r=p/q<\min_{D\in\mc D}D(\rho\|\sigma)/D(\rho'\|\sigma')$ with $p,q\in\N$. We now have, for all $D\in\mathcal{D}$, 
        \begin{align}
        D(\rho^{\otimes q}\|\sigma^{\otimes q})&=qD(\rho\|\sigma)=\frac{p}{r}D(\rho\|\sigma)>pD(\rho\|\sigma)\max_{D'\in\mc D}\frac{D'(\rho'\|\sigma')}{D'(\rho\|\sigma)}\\
        &\geq pD(\rho'\|\sigma')=D(\rho'^{\otimes p}\|\sigma'^{\otimes p}).
        \end{align}
        Note that $(\rho^{\otimes q},\sigma^{\otimes q})$ satisfies \eqref{eq:pu2} since $(\rho,\sigma)$ does. Then, according to Theorem \ref{thm:LS}, we know that $(\rho,\sigma)^{\boxtimes qn}\succeq (\rho',\sigma')^{\boxtimes pn}$ for any $n\in\N$ large enough. This means that $r=p/q$ is $((\rho,\sigma),(\rho',\sigma'))$-achievable, so that $r\leq r((\rho,\sigma)\to (\rho',\sigma'))$. Since this holds for any rational $r<\min_{D\in\mc D}D(\rho\|\sigma)/D(\rho'\|\sigma')$, we finally have \eqref{eq:otherway}. Combining our observations, we arrive at \eqref{eq:rate2}.
    \end{proof}

    \section{Discussion}

    Recall that Proposition \ref{propo:Uhlmann} gives necessary and sufficient for one-shot majorization between two pairs of pure states. There is a generalization for one-shot majorization between tuples of $d\geq3$ pure states, also due to Uhlmann \cite{Uhlmann1985}. Namely, consider two tuples of $d$ pure states $A:=\big(\ketbrar{\alpha_1},\ldots,\ketbrar{\alpha_d}\big)$ and $B:=\big(\ketbrar{\beta_1},\ldots,\ketbrar{\beta_d}\big)$. The Gramm matrix of $A$ is defined as $(G(A))_{i,j}:=\braket{\alpha_i}{\alpha_j}$, and similarly for $B$. Then, $A$ majorizes $B$ if and only if there exists a positive semi-definite matrix $M$ such that $G(A)\circ M=G(B)$, where $\circ$ is the Hadamard product (i.e. entry-wise multiplication). 

    This result leads us to attempt to generalize this paper's findings on large-sample and catalytic majorization of pairs as in \eqref{eq:pairform}, to tuples of $d\geq3$ states of this form. Namely, we can consider tuples $(\rho^{(1)},\ldots,\rho^{(d)})$, with $d\geq3$, where 
    \begin{equation}\label{eq:dtuple}
    \rho^{(k)}=\sum_{i=1}^n p_i^{(k)}\ketbrar{i}\otimes\big|\alpha_i^{(k)}\big\rangle\big\langle\alpha_i^{(k)}\big|
    \end{equation}
    are cq-states with pure components. The case of $d$-tuples of finite probability distributions was studied in \cite{farooq2024,verhagen2025}. The multivariate relative entropies involved there are generalizations of the bipartite R\'enyi relative entropies. One could try to use these results, in combination with Uhlmann's result on $d$-tuples of pure states stated above, to derive multivariate quantum relative entropies that give conditions for large-sample and catalytic majorization of $d$-tuples such as \eqref{eq:dtuple}. In this setting, the problem of identifying all monotone homomorphisms reduces to analyzing the values they take on tuples of $d$ pure states. In the $d=2$ case we have shown these to be $z$-th powers of the fidelity between the two pure states (see Proposition \ref{prop:z}). Our preliminary results suggest that in the general $d$ case these will be products of the fidelities between the states in each of the $d(d-1)/2$ pairs of pure states in the tuple, raised to a power that can be different for each pair. 

    It seems to be possible to relax the condition \eqref{eq:pu2} required for large-sample majorization, both in the exact and asymptotic setting. This requires a deeper study of the semiring $S_{\rm e.o.}$, which consists precisely of those pairs in $S_{\rm m.r.}$ that do not satisfy \eqref{eq:pu2}. In Section \ref{sec:spectrum} we found that the monotone homomorphisms associated to $S_{\rm e.o.}$ are those associated to $S_{\rm m.r.}$ as well as possibly additional ones for $\alpha=0,1$ and $z<1$, their pointwise limit $z\rightarrow-\infty$, and potentially also associated derivations. Currently we are unsure whether these extra homomorphisms are monotone or not. However, the $\alpha$-$z$ relative entropies for $\alpha=0,1$, $z<1$ have been shown to violate DPI for general pairs of states. The counterexamples found in the literature do not involve pairs of states of the form \eqref{eq:pairform} that we study. Hence, the possibility remains that these homomorphisms are in fact monotone within our restricted setting. 

    \section*{Acknowledgements}
    We thank all anonymous reviewers for their useful feedback. Also, we thank P\'eter Vrana for pointing out reference \cite{strassen1988} to us. This project is supported by the National Research Foundation, Singapore through the National Quantum Office, hosted in A*STAR, under its Centre for Quantum Technologies Funding Initiative (S24Q2d0009). 
    
    \bibliographystyle{ultimate}
    \bibliography{bibliography}
	
\end{document}